\journal{Applied and Computational Harmonic Analysis}
\newdefinition{definition}{Definition}
\newdefinition{remark}{Remark}
\newtheorem{proposition}{Proposition}
\newdefinition{example}{Example}
\newcommand{\etal}{\emph{et al.} }
\newcommand{\eg}{\emph{e.g.}}
\newcommand{\ie}{\emph{i.e.}}
\DeclareMathOperator{\Tr}{Tr}
\DeclareMathOperator{\MSE}{MSE}
\DeclareMathOperator{\diag}{diag}
\newcommand\norm[1]{\left\lVert#1\right\rVert}
\newcommand\abs[1]{\left\lvert#1\right\rvert}
\def\bR{{\mathbb {R}}}
\def\bC{{\mathbb {C}}}
\def\bN{{\mathbb {N}}}
\def\bZ{{\mathbb {Z}}}
\def\bn{{\mathbf{n}}}
\def\bz{{\mathbf{z}}}
\def\0{{\mathbf{0}}}
\def\1{{\mathbf{1}}}
\def\bb{{\mathbf{b}}}
\def\bc{{\mathbf{c}}}
\def\be{{\mathbf{e}}}
\def\bx{{\mathbf{x}}}
\def\by{{\mathbf{y}}}
\def\BI{{\mathbf{I}}}
\def\BA{{\mathbf{A}}}
\def\BC{{\mathbf{C}}}
\def\BD{{\mathbf{D}}}
\def\BB{{\mathbf{B}}}
\def\EE{{\mathrm{E}}}
\def\BE{{\mathbf{E}}}
\def\BU{{\mathbf{U}}}
\def\BY{{\mathbf{Y}}}
\def\BS{{\mathbf{S}}}
\def\cP{{\mathcal{P}}}
\def\BP{{\mathbf{P}}}
\def\BR{{\mathbf{R}}}
\def\BF{{\mathbf{F}}}
\def\mW{{\mathcal{W}}}
\def\mF{{\mathcal{F}}}
\newcommand{\bbra}[1]{\left\llbracket{#1}\right\rrbracket}
\newcommand{\floor}[1]{\left\lfloor{#1}\right\rfloor}
\newcommand*\mean[1]{\overline{#1}}
\begin{document}

\begin{frontmatter}
\title{Holographic Sensing}

\author[freddy,ntu]{A.~M.~Bruckstein}
\ead{ambruckstein@ntu.edu.sg, freddy@cs.technion.ac.il}

\cortext[cor1]{Corresponding author}
\author[ntu]{M.~F.~Ezerman\corref{cor1}}
\ead{fredezerman@ntu.edu.sg}

\author[ntu]{A.~A.~Fahreza}
\ead{adamas@ntu.edu.sg}

\author[ntu]{S.~Ling}
\ead{lingsan@ntu.edu.sg}

\address[freddy]{Department of Computer Science, Technion, Israel Institute of Technology, Haifa 32000, Israel.}

\address[ntu]{School of Physical and Mathematical Sciences, Nanyang Technological University, 21 Nanyang Link, Singapore 637371.}

\begin{abstract}
Holographic representations of data encode information in packets of equal importance that enable progressive recovery. The quality of recovered data improves as more and more packets become available. This progressive recovery of the information is independent of the order in which packets become available. Such representations are ideally suited for distributed storage and for the transmission of data packets over networks with unpredictable delays and or erasures.

Several methods for holographic representations of signals and images have been proposed over the years and multiple description information theory also deals with such representations. Surprisingly, however, these methods had not been considered in the classical framework of optimal least-squares estimation theory, until very recently. We develop a least-squares approach to the design of holographic representation for stochastic data vectors, relying on the framework widely used in modeling signals and images.
\end{abstract}

\begin{keyword}
cyclostationary data \sep fusion frame \sep holographic representation \sep mean squared error estimation \sep stochastic data \sep Wiener Filter.
\end{keyword}

\end{frontmatter}


\section{Introduction}\label{sec:intro}

Reducing the dimension of data in manners that preserve some important properties or guarantee a desired level of recovery, despite the presence of noise, has been a recurring theme of research in data processing. Examples of prominent techniques include {\it successive refinement} of information, {\it compressive} (or {\it compressed}) {\it sensing}, and {\it multiple description coding}. 

One may want to optimally describe a data given a particular level of distortion before  deciding, later on, that the data needs to be described more accurately. This naturally leads to the need for a successive refinement of information. The goal is to achieve an optimal description \emph{at each stage} as more and more information is supplied. Equitz and Cover provides a characterization of such problems from rate-distortion theory in~\cite{EC91}. They discuss two major tasks. The first is to determine the minimum rate at which information about the source must be conveyed to the user in order to achieve a given level of fidelity. The second is to investigate channels that have the minimum capacity to convey the information for a prescribed distortion. Their work is the basis of many follow-up inquiries. 

Compressive sensing simultaneously senses and compresses a signal that, under sparsity conditions, retains complete information on the data. In the sensing process, the signal is projected onto a set of vectors, which can be specifically designed or randomly chosen. The recovery process is subsequently performed by solving an inverse problem. Several seminal papers, \eg, the works of Donoho~\cite{Donoho06} and Cand{\`e}s, Romberg and Tao~\cite{CandesTR2006} set up a strong theoretical foundation for compressed sensing. Since then researchers have come up with more detailed analyses and algorithms based on various practical models with accompanying constraints and optimization objectives. The work of Elad~\cite{Elad07} is an early example that provides significant improvement over the random projection model. Many other approaches can already be found in textbooks, such as~\cite{EK12}. More recent refinements include the {\it adaptive} model where the measurement, \ie, the projection matrix, is adaptively designed using either prior information on the sparse signal or from previous measurements. Another common thread (see, \eg, the discussion in~\cite{WWS14}) is the design of some linear compression matrix that minimizes the mean squared error (MSE) or maximizes the information rate at the optimal compression ratio under some bandwidth limitation.

Multiple description coding (MDC) (see, \eg, the exposition of Goyal in~\cite{Goyal01}) is motivated by the need to reduce our dependence on the delivery mechanism where the {\it ordering} of the data packets is crucial. Its design philosophy assumes that the transport mechanism, \ie, the modulation, channel coding, and transmission protocol, is somewhat flawed or unpredictable. Hence, it is imperative to ensure that the usefulness of the bits that do arrive is more important than how many bits are available. A notable extension of MDC is the use of wavelet for image coding treated by Servetto \etal in~\cite{Servetto2000}.
		
In this work we focus on {\it holographic sensing} where information is encoded in packets of equal importance, enabling progressive recovery. As more and more packets become available, the recovered data improves progressively. The quality of this improvement must remain independent of the order in which packets become available. Several methods for holographic representations of signals and images have earlier been proposed, \eg, in~\cite{Bruckstein1998}. 
We develop a {\it least-squares approach} to the design of holographic representation for stochastic data vectors using the framework widely used in modeling signals and images. The design criteria emphasizes {\it smoothness}, 
an important aspect that has often been overlooked. Such representations are ideally suited for distributed storage and transmission or communication of data packets over networks with unpredictable delays or erasures.
 	
We start by fixing some notations in the rest of this introduction. Section~\ref{sec:prelim} explains our objectives and design philosophy by way of a toy example. Sections~\ref{sec:align} and~\ref{sec:unaligned} discuss, respectively, the situations for stochastic data vectors under the assumption that the projections are either aligned or unaligned with the standard representation basis. The treatment for the cyclostationary data vectors is given in Section~\ref{sec:cyclo}. 
Section~\ref{sec:comp} details computational implementations. Some examples in various scenarios highlight insights gleaned from actual input parameters. Section~\ref{sec:Kuty} compares and contrasts our design with that of Kutyniok \etal in~\cite{Kutyniok2009}. Their method, based on the {\it Grassmannian packing} and the theory of {\it frames}, was an initial inspiration in our investigation. 
Section~\ref{sec:conclusion} concludes this work with a brief summary and a list of further directions to pursue. 

Let $0 \leq k < \ell $ be integers. Denote by $\bbra{\ell}$ the set $\{1,2,\ldots,\ell\}$ and by $\bbra{k,\ell}$ the set $\{k,k+1,\ldots,\ell\}$. Let $\bN,\bR$, and $\bC$ denote, respectively, the set of positive integers, the field of real numbers, and the field of complex numbers. The conjugate of $c \in \bC$ is denoted by ${c^*}$. Vectors are expressed as columns and denoted by bold lowercase letters. Matrices are represented by either bold uppercase letters or upper Greek symbols. An $n \times n$ diagonal matrix with diagonal entries $v_j: j \in \bbra{n}$ is denoted by $\diag(v_1,v_2,\ldots,v_n)$. The identity matrix is $\BI$ or $\BI_n$ if the dimension $n$ is important. Concatenation of vectors or matrices is signified by the symbol $|$ between the components. The transpose and the conjugate transpose of a matrix $\BA$ are $\BA^{\top}$ and $\BA^{\dagger}$, respectively.

\section{Preliminaries}\label{sec:prelim}

Audio and video signals as well as still images and a wealth of other spatio-temporally indexed data are effectively encoded in high dimensional vectors. They may be regarded as realizations of a stochastic process $\{\bx_{\omega}: \omega \in \Omega\}$ for some index set $\Omega$ where $\omega$ denotes the random choice of a particular realization and $\bx_{\omega} \in \bR^{M}$ with $M$ being the (often very high) dimension of the signal space. A classical way to characterize the properties of the process is via ensemble averages. Here the first two moments, namely the {\it mean} and the {\it autocovariance}, are of particular interest and importance. Letting $\EE_{\{\omega \in \Omega\}}$ to be the ensemble averaging operator, the mean is $\mean{\bx}=\EE_{\{\omega \in \Omega\}}[\bx_{\omega}]$ and the autocovariance is $\BR_{xx}=\EE_{\{\omega \in \Omega\}}[\bx_{\omega} \bx_{\omega}^{\top}]$. When there is no confusion, we use $\EE$ or $\EE_{\{\omega\}}$ instead of $\EE_{\{\omega \in \Omega\}}$.

We often center the data to have $\mean{\bx}=\0$ and, hence, the $M \times M$ autocovariance matrix $\BR_{xx}$ displays the variances of the entries of $\bx_{\omega}$ and the possible covariances between them. It is well-known that $\BR_{xx}$ is symmetric positive definite with a spectral decomposition
\begin{equation}\label{eq:spectral}
\BR_{xx}=\Psi \Lambda \Psi^{\dagger} \mbox{ where } \Lambda =\diag(\lambda_1,\lambda_2,\ldots,\lambda_M) \mbox{, with } \lambda_1 \geq \lambda_2 \geq \ldots\geq \lambda_{M}
\end{equation}
that displays the ordered eigenvalues $\lambda_j$s and the columns of $\Psi$ are their corresponding eigenvectors.

This work assumes that a vector $\bx \in \bR^{M}$ is a realization of a random process with zero mean and a given autocovariance matrix $\BR_{xx}$. For representation purposes $\bx$ will be projected into subspaces of $\bR^{M}$ of dimension $m << M$. It is also assumed that there is an error associated with these projections that can be modelled as an additive noise. The noise vectors are also realizations of a stochastic process $\{\bn_{\widetilde{\omega}}\}$ with zero mean and autocovariance $\BR_{nn} \triangleq \sigma_n^2 \BI_m$. Hence, we assume that the noise process has independent identically distributed entries of variance $\sigma_n^2$.

Suppose that the orthogonal projection operator $\BP_{w}$ projects vectors from $\bR^M$ onto a subspace of dimension $m$. If $\BU_w$ is an $M \times m$ matrix whose columns form an orthonormal basis for $\BP_{w}$, we have $\BP_{w} = \BU_w \BU_w^{\top}$ and the operation $\BU_w^{\top} \bx$ produces a vector of $m$ entries displaying the coefficients of the representation of $\BP_{w} \bx$ in the basis represented in $\BU_w$. Indeed, $\BP_{w} \bx = \BU_w (\BU_w^{\top} \bx)$. We probe the vector $\bx$ by measuring the vector $\BU_w^{\top} \bx$ of coefficients of $\BP_{w} \bx$ to construct the data packet which is a column vector of length $m$ given by $\displaystyle{\bz \triangleq \BU_w^{\top} \bx + \bn}$ 
where $\bn$, independent of $\bx$, is the above-mentioned realization of a white noise process with zero mean and covariance $\sigma_n^2 \BI_m$.

The classical theory of Wiener filtering (see, \eg, \cite[Chapter~3]{KSH00}) provides us with the following result. Given the data $\bz$ and the matrix $\BU_w^{\top}$ and the second order statistics of $\bn$ and $\bx$, namely $\BR_{nn}=\sigma_n^2 \BI$ and $\BR_{xx}=\EE[\bx_{\omega} \bx_{\omega}^{\top}]$, the {\em optimal estimator} for $\bx$ in the {\it expected mean squared error} sense is $\displaystyle{\widehat{\bx} = \BR_{xz} \BR_{zz}^{-1} \bz}$ with error $\be \triangleq \bx - \widehat{\bx}$ of covariance $\displaystyle{ \BR_{ee}=\EE_{\{\omega,\widetilde{\omega}\}}[\be \be^{\top}]=\BR_{xx} - \BR_{xz} \BR_{zz}^{-1} \BR_{zx}}$. Here, using
\begin{align*}
\BR_{xz}=\EE[\bx \bz^{\top}] &=\BR_{xx} \BU_w \mbox{ of size } M \times m,\\
\BR_{zx}=\EE[\bz \bx^{\top}] &=\BU_w^{\top} \BR_{xx} \mbox{ of size } m \times M,\\
\BR_{zz}=\EE[\bz \bz^{\top}] &= \BU_w^{\top} \BR_{xx} \BU_w + \sigma_n^2 \BI_m \mbox{ of size } m \times m,
\end{align*}
one derives 
\begin{equation}\label{eq:optest}
\BR_{ee}=\BR_{xx} - \BR_{xx} \BU_{w} \left(\BU_w^{\top} \BR_{xx} \BU_w + \sigma_n^2 \BI_m\right)^{-1} \BU_{w}^{\top} \BR_{xx}=
\left(\BR_{xx}^{-1}+\frac{1}{\sigma_n^2} \BU_{w}\BU_{w}^{\top}\right)^{-1}.
\end{equation}
The second equality comes from the Sherman-Morrison-Woodbury Formula, given below, for matrix inversion with $\BA=\BR_{xx}^{-1}$, $\BC=\sigma_n^{-2} \BU_{w}$, and $\BD=\BU_{w}^{\top}$. We assume $\sigma_n^2 >0$ since, otherwise, we are in the noiseless case, which can easily be treated separately.

\begin{proposition}\cite[p.~65]{Golub2013}(Sherman-Morrison-Woodbury Formula)\label{propW} Given an $n \times n$ invertible matrix $\BA$, an $n \times k$ matrix $\BC$, and a $k \times n$ matrix $\BD$, let $\BB=\BA+\BC \BD$. Let $(\BI_{k}+ \BD \BA^{-1} \BC)$ be invertible. Then 
$\BB^{-1}= \BA^{-1} - \BA^{-1} \BC (\BI_{k}+ \BD \BA^{-1}  \BC)^{-1} \BD \BA^{-1}$. 
\end{proposition}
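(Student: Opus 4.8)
The plan is to prove the formula by direct verification rather than by any structural argument: I will exhibit the claimed expression as a genuine inverse of $\BB$, which simultaneously establishes that $\BB$ is invertible. Write $\BM = (\BI_{k} + \BD \BA^{-1} \BC)^{-1}$, which exists by hypothesis, and let $\BX = \BA^{-1} - \BA^{-1} \BC \BM \BD \BA^{-1}$ be the candidate inverse. Since $\BB$ and $\BX$ are both $n \times n$, it suffices to check that $\BB \BX = \BI_{n}$; for a square matrix a right inverse is automatically a two-sided inverse, so this single identity yields both the invertibility of $\BB$ and the stated formula for $\BB^{-1}$.

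First I would expand $\BB \BX = (\BA + \BC \BD)(\BA^{-1} - \BA^{-1} \BC \BM \BD \BA^{-1})$ into its four products and use $\BA \BA^{-1} = \BI_{n}$ to obtain
\begin{equation*}
\BB \BX = \BI_{n} - \BC \BM \BD \BA^{-1} + \BC \BD \BA^{-1} - \BC (\BD \BA^{-1} \BC) \BM \BD \BA^{-1}.
\end{equation*}
The next step is the one place where a little care is needed: I would factor $\BC$ out on the left and $\BD \BA^{-1}$ out on the right of the last three terms, collecting the inner factor as $\BI_{k} - (\BI_{k} + \BD \BA^{-1} \BC)\BM$. By the definition of $\BM$ the product $(\BI_{k} + \BD \BA^{-1} \BC)\BM$ collapses to $\BI_{k}$, so the bracket vanishes and only $\BI_{n}$ survives, giving $\BB \BX = \BI_{n}$.

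I do not anticipate a genuine obstacle here, as the result is a routine identity; the only subtlety is bookkeeping in the factoring step, namely recognizing that the term $\BC \BD \BA^{-1} \BC \BM \BD \BA^{-1}$ must be regrouped as $\BC (\BD \BA^{-1} \BC) \BM \BD \BA^{-1}$ so that the $k \times k$ inverse $\BM$ can absorb its neighboring $k \times k$ block $\BI_{k} + \BD \BA^{-1} \BC$. I would also note explicitly that the hypothesis only guarantees invertibility of the smaller matrix $\BI_{k} + \BD \BA^{-1} \BC$, and that the invertibility of $\BB$ itself is a \emph{conclusion}, delivered for free by the square right-inverse argument, rather than an assumption.
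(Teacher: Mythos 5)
Your verification is correct: the paper itself offers no proof of Proposition~\ref{propW}, importing it by citation from Golub and van Loan, so there is no in-paper argument to compare against. Your direct computation is the standard one and checks out --- expanding $\BB\BX$ into four terms, factoring $\BC$ on the left and $\BD\BA^{-1}$ on the right of the last three, and collapsing $(\BI_k+\BD\BA^{-1}\BC)\BM$ to $\BI_k$ kills everything but $\BI_n$ --- and your observation that a one-sided inverse of a square matrix is two-sided legitimately delivers the invertibility of $\BB$ as a conclusion rather than a hypothesis. This is a complete and self-contained proof of exactly the statement the paper uses.
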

Given a single projection operator $\BP_{w}=\BU_w \BU_w^{\top}$, the matrix $\BR_{ee}$ in (\ref{eq:optest}) can be written as $\displaystyle{\BR_{ee}=
\left(\BR_{xx}^{-1}+\frac{1}{\sigma_n^2} \BP_{w}\right)^{-1}}$. We consider the following interesting cases.  
\begin{enumerate}
	\item $\BR_{xx}=\lambda \BI_M$ for a given $\lambda >0$.
	\item $\BR_{xx}=\Lambda \triangleq \diag(\lambda_1,\lambda_2,\ldots, \lambda_M)$.
	\item $\BR_{xx}=\Psi \Lambda \Psi^{\dagger}$ with $\Psi$ a unitary or orthogonal matrix, \ie, $\Psi^{-1}=\Psi^{\dagger}$ or $\Psi^{-1}=\Psi^{\top}$.
\end{enumerate} 
Let the chosen orthonormal basis for $\bR^{M}$ be the natural basis $\{\bb_j\}_{j \in \bbra{M}}$ with $\bb_j$ the vector $(0,\ldots,0,1,0,\ldots,0)^{\top}$ having $1$ in the $j$-th position. The projection operators select $m$ samples from the vector $\bx$, \ie, $\BU_w=(\bb_{k_1}|\bb_{k_2}|\ldots|\bb_{k_m})$, implying that $\BP_{w}=\BU_{w} \BU_w^{\top}$ is a diagonal matrix with entries $1$ at locations $k_1,k_2,\ldots,k_m$ and $0$ elsewhere. 

\begin{proposition}\label{prop:basic}
Let $\{\bb_j\}_{j \in \bbra{M}}$ be the orthonormal natural basis. Using the projection operator $\BP_{w}=\BU_{w} \BU_w^{\top}$ with $\BU_w=(\bb_{k_1}|\bb_{k_2}|\ldots|\bb_{k_m})$ we obtain the following results.
\begin{enumerate}
\item If $\BR_{xx}=\lambda \BI_M$ for a given $\lambda >0$, then 
\[
\MSE=\Tr(\BR_{ee})=M \lambda - \frac{m \lambda^2}{\sigma_n^2+\lambda}
=\lambda \left(M - \frac{m}{1+\frac{\sigma_n^2}{\lambda}}\right).
\]

\item If $\BR_{xx}=\Lambda$, then
\begin{equation}\label{eq:case2}
\MSE=\Tr(\BR_{ee})= \sum_{\ell=1}^M \lambda_{\ell} - \sum_{j=1}^m 
\frac{\lambda_{k_j}^2}{\sigma_n^2+\lambda_{k_j}}.
\end{equation}
\end{enumerate}
\end{proposition}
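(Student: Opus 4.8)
The plan is to exploit the compact form $\BR_{ee}=\left(\BR_{xx}^{-1}+\frac{1}{\sigma_n^2}\BP_{w}\right)^{-1}$ recorded in the rightmost expression of (\ref{eq:optest}). The decisive observation is that, for the sampling projections considered here, $\BP_{w}=\BU_{w}\BU_w^{\top}$ is precisely the diagonal $0$--$1$ matrix carrying ones at the sampled indices $k_1,\ldots,k_m$ and zeros elsewhere. In both cases $\BR_{xx}$, and hence $\BR_{xx}^{-1}$, is also diagonal, so $\BR_{xx}^{-1}+\frac{1}{\sigma_n^2}\BP_{w}$ is diagonal and its inverse $\BR_{ee}$ is obtained simply by inverting the diagonal entries one at a time.

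First I would write down the $j$-th diagonal entry of $\BR_{xx}^{-1}+\frac{1}{\sigma_n^2}\BP_{w}$. For an index $j$ that is \emph{not} sampled this entry equals $1/\lambda_j$ (with $\lambda_j=\lambda$ in case~1), so the corresponding entry of $\BR_{ee}$ is $\lambda_j$. For a sampled index $j\in\{k_1,\ldots,k_m\}$ the entry is $\frac{1}{\lambda_j}+\frac{1}{\sigma_n^2}$, whence the corresponding entry of $\BR_{ee}$ is $\frac{\lambda_j\sigma_n^2}{\sigma_n^2+\lambda_j}$.

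Next I would compute the trace by summing these diagonal entries. Writing the total as the full sum $\sum_{\ell}\lambda_\ell$ minus the corrections incurred at the sampled indices, each correction collapses via the elementary identity $\lambda_j-\frac{\lambda_j\sigma_n^2}{\sigma_n^2+\lambda_j}=\frac{\lambda_j^2}{\sigma_n^2+\lambda_j}$, which immediately yields (\ref{eq:case2}) for case~2. Specializing every $\lambda_j$ to the common value $\lambda$ and collapsing the sums then gives $M\lambda-\frac{m\lambda^2}{\sigma_n^2+\lambda}$, which rearranges to $\lambda\bigl(M-\frac{m}{1+\sigma_n^2/\lambda}\bigr)$ for case~1.

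There is no serious obstacle here: once the joint diagonality of $\BR_{xx}^{-1}$ and $\BP_{w}$ is recognized, the problem reduces to scalar inversion of diagonal entries followed by algebraic bookkeeping. The only points requiring a little care are to keep the sampled and unsampled indices separated when reassembling the trace, and to note that the positive-definiteness of $\BR_{xx}$ guarantees $\lambda_j>0$ for all $j$, so that every diagonal inverse is well defined and $\sigma_n^2+\lambda_j>0$ never vanishes in the denominators.
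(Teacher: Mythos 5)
Your proposal is correct and follows essentially the same route as the paper: both reduce the problem to the compact form $\BR_{ee}=\left(\BR_{xx}^{-1}+\frac{1}{\sigma_n^2}\BP_{w}\right)^{-1}$, observe that this matrix is diagonal with entries $\lambda_j$ at unsampled indices and $\frac{\lambda_j\sigma_n^2}{\sigma_n^2+\lambda_j}$ at sampled ones, and sum the diagonal via the identity $\lambda_j-\frac{\lambda_j\sigma_n^2}{\sigma_n^2+\lambda_j}=\frac{\lambda_j^2}{\sigma_n^2+\lambda_j}$. The only cosmetic difference is that you treat the general diagonal case first and specialize to $\lambda_j=\lambda$, whereas the paper handles the two cases in the stated order.
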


\begin{proof}
In all cases, we use (\ref{eq:optest}) to compute for $\BR_{ee}$. 

Let $\BR_{xx}=\lambda \BI_M$. Note that  $\BR_{ee}=\displaystyle{\left(\lambda^{-1} \BI_M + \frac{1}{\sigma_n^2} \BP_{w}\right)^{-1}}$ is diagonal with positive entries
\[
\beta_j=
\begin{cases}
\lambda \mbox{ if } j \notin \{k_1,k_2,\ldots,k_m\},\\
\left(\frac{1}{\lambda}+\frac{1}{\sigma_n^2}\right)^{-1}=\frac{\lambda \sigma_n^2}{\sigma_n^2+\lambda} \mbox{ if } j \in \{k_1,k_2,\ldots,k_m\}.
\end{cases}
\] 
Hence, $\Tr(\BR_{ee})$ is given by 
\[
\lambda(M-m)+ m \frac{\lambda \sigma_n^2}{\sigma_n^2+\lambda}=\lambda M + m \left(\frac{\lambda \sigma_n^2}{\sigma_n^2+\lambda} - \frac{\lambda(\sigma_n^2+\lambda)}{\sigma_n^2+\lambda}\right)=\lambda M - \frac{\lambda^2 m}{\sigma_n^2+\lambda}.
\]
	
If $\BR_{xx}=\Lambda$, then $\BR_{ee}$ is a diagonal matrix with positive entries
\[
\beta_{\ell}=
\begin{cases}
\lambda_{\ell} \mbox{ if } \ell \notin \{k_1,k_2,\ldots,k_m\},\\
\left(\frac{1}{\lambda_{\ell}}+\frac{1}{\sigma_n^2}\right)^{-1}=
\frac{\lambda_{\ell} \sigma_n^2}{\sigma_n^2+\lambda_{\ell}} \mbox{ if } \ell \in \{k_1,k_2,\ldots,k_m\}.
\end{cases}
\] 
Hence, $\Tr(\BR_{ee})=\displaystyle{\left(  \sum_{\ell=1}^{M} \lambda_{\ell}-\sum_{j=1}^m \lambda_{k_j} \right)
+ \sum_{j=1}^m \frac{\lambda_{k_j} \sigma_n^2}{\sigma_n^2+\lambda_{k_j}}
=\sum_{\ell=1}^{M} \lambda_{\ell}-\sum_{j=1}^m \frac{\lambda_{k_j}^2}{\sigma_n^2+\lambda_{k_j}}}$.
\end{proof}

The situation is more complicated if $\BP_{w}$ is any orthogonal projection operator, \ie, $\BP_{w} = \BU_{w} \BU_w^{\top}$ where $\BU_{w}$ is a known but otherwise \emph{arbitrary} left-orthogonal basis for the subspace onto which $\BP_{w}$ projects. Note, however, that if we project onto a subspace with basis vectors given by the columns of the matrix $\Psi \BU_{w}$, which is a matrix ``adapted'' via $\Psi$ to the statistics of the $\bx$-process, we obtain 
$\displaystyle{\widetilde{\BP}_{w} \triangleq \Psi \BU_{w} \BU_{w}^{\top} \Psi^{\top}=\Psi \BP_{w} \Psi^{\top}}$ where $\BP_{w}$ is now, again, the diagonal matrix with entries $1$ at locations $k_j: j \in \bbra{m}$ and $0$ elsewhere.

\begin{proposition}\label{prop:gencase}
In the general case of $\BR_{xx}=\Psi \Lambda \Psi^{\top}$, with $\widetilde{\BP}_{w}$ and 
$\displaystyle{\widetilde{\bz} \triangleq \BU_{w}^{\top} \Psi^{\top} \bx 
+ \bn}$, we estimate the vector $\bx$ via $\widehat{\bx}=\BR_{xx} \BR_{\widetilde{z} \widetilde{z}} \widetilde{\bz}$. The $\MSE$ is given in (\ref{eq:case2}).
\end{proposition}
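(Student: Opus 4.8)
The plan is to reduce this general case to the already-solved diagonal case (item~2 of Proposition~\ref{prop:basic}) by exploiting the orthogonal invariance of the trace. First I would observe that the adapted operator $\widetilde{\BP}_{w} = \Psi \BP_{w} \Psi^{\top}$ is precisely the orthogonal projection onto the column space of $\BV_{w} \triangleq \Psi \BU_{w}$, whose columns are orthonormal since $\Psi$ is orthogonal and the columns of $\BU_{w}$ are orthonormal. Because $\widetilde{\bz} = \BU_{w}^{\top} \Psi^{\top} \bx + \bn = \BV_{w}^{\top} \bx + \bn$, the measurement model is exactly the generic Wiener-filter setup of Section~\ref{sec:prelim} with orthonormal sensing matrix $\BV_{w}$. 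Consequently the error covariance is handed to us directly by (\ref{eq:optest}), namely $\BR_{ee} = \left(\BR_{xx}^{-1} + \sigma_n^{-2}\, \widetilde{\BP}_{w}\right)^{-1}$.

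Next I would substitute the spectral decomposition $\BR_{xx} = \Psi \Lambda \Psi^{\top}$, so that $\BR_{xx}^{-1} = \Psi \Lambda^{-1} \Psi^{\top}$, together with $\widetilde{\BP}_{w} = \Psi \BP_{w} \Psi^{\top}$, and factor the common $\Psi$ and $\Psi^{\top}$ out of the sum:
\[
\BR_{ee} = \left(\Psi \left(\Lambda^{-1} + \frac{1}{\sigma_n^2}\BP_{w}\right)\Psi^{\top}\right)^{-1} = \Psi \left(\Lambda^{-1} + \frac{1}{\sigma_n^2}\BP_{w}\right)^{-1}\Psi^{\top},
\]
where the last equality uses $\Psi^{-1} = \Psi^{\top}$. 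The inner matrix $\left(\Lambda^{-1} + \sigma_n^{-2}\BP_{w}\right)^{-1}$ is exactly the diagonal error covariance analysed in the proof of Proposition~\ref{prop:basic}, with the entries $\beta_{\ell}$ computed there.

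Finally, because $\Psi$ is orthogonal, the cyclic property of the trace yields $\Tr(\BR_{ee}) = \Tr\!\left(\Psi \BM \Psi^{\top}\right) = \Tr\!\left(\BM\, \Psi^{\top}\Psi\right) = \Tr(\BM)$ for $\BM = \left(\Lambda^{-1} + \sigma_n^{-2}\BP_{w}\right)^{-1}$. Hence $\MSE = \Tr(\BR_{ee})$ coincides with the diagonal-case trace, which equals $\sum_{\ell=1}^{M}\lambda_{\ell} - \sum_{j=1}^{m} \lambda_{k_j}^2/(\sigma_n^2 + \lambda_{k_j})$, i.e., the expression in (\ref{eq:case2}). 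I expect the only genuinely delicate step to be the bookkeeping in the first paragraph: verifying that probing $\bx$ with the adapted operator really is the standard model for $\BV_{w} = \Psi \BU_{w}$, so that (\ref{eq:optest}) applies verbatim. Once that identification is secured, the factoring of $\Psi$ and the trace invariance are routine, and the claim that the adapted sensing matrix incurs the same $\MSE$ as the eigenbasis-aligned sampling follows immediately.
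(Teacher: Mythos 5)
Your proposal is correct and follows essentially the same route as the paper: write the error covariance as $\left(\BR_{xx}^{-1}+\sigma_n^{-2}\widetilde{\BP}_{w}\right)^{-1}$, factor $\Psi$ and $\Psi^{\top}$ out of the sum using $\Psi^{-1}=\Psi^{\top}$, and invoke the cyclic invariance of the trace to reduce to the diagonal case of Proposition~\ref{prop:basic}. The only difference is that you explicitly verify that the adapted sensing matrix $\Psi\BU_{w}$ has orthonormal columns so that (\ref{eq:optest}) applies verbatim, a bookkeeping step the paper leaves implicit.
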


\begin{proof}
The resulting optimal error covariance matrix is 
\begin{align*}
\widetilde{\BR}_{ee} &\triangleq 
\left(\left(\Psi \Lambda \Psi^{\top}\right)^{-1}+\frac{1}{\sigma_n^2} \widetilde{\BP}_{w}\right)^{-1}=
\left(\Psi \left(\Lambda^{-1} +\frac{1}{\sigma_n^2} \BP_{w} \right) \Psi^{\top}\right)^{-1}\\
&=\Psi \left(\Lambda^{-1} +\frac{1}{\sigma_n^2} \BP_{w}\right)^{-1} \Psi^{\top}.
\end{align*}
The fact that the trace mapping is linear and invariant under cyclic permutations implies that $\MSE=\Tr(\widetilde{\BR}_{ee})$ is the one already derived in~(\ref{eq:case2}).
\end{proof}

To design the holographic representations we use the types of probings of the vector $\bx \in \bR^{M}$ described above. The vector is a realization of a random process $\{\bx_{\omega}: \omega \in \Omega\}$ with known statistics $\displaystyle{\EE_{\{\omega\}}[\bx_{\omega}]=\0}$ and 
$\displaystyle{ \EE_{\{\omega\}}[\bx_{\omega} \bx_{\omega}^{\top}]=\BR_{xx}}$. Probings are done via orthogonal projections onto subspaces of $\bR^{M}$. The measurements are in general contaminated by noise vectors that are independent of $\bx$ with independent and identically distributed (i.i.d.) entries of mean $0$ and variance $\sigma_n^2$. We aim for arrangements of subspaces that yield ``equally important" projections in the sense of \emph{providing similar information} about $\bx$. These projections must combine in the process of estimating $\bx$ in such a way that any pair, any triplet, and more generally any $\ell$-tuple of them yield similar restoration quality in their estimation of $\bx$. Furthermore, as the number of projections increases, the quality of the recovery should improve to a level that reaches the best possible, given the amount of data that has been made available up to that point. The holographic representation property ensures that the quality of estimating $\bx$ depends only on the number of probing data packets available, independent of the specific projections onvolved.

To set the stage, consider $\BR_{xx}=\lambda \BI_M$, \ie, the data is a vector with uncorrelated entries having variances all equal to $\lambda$. Assume further that $M = N \cdot m $. It is immediate to propose the design of $N$ subspaces of $\bR^M$, each of dimension $m$, having orthonormal bases selected from the set $\{\bb_1,\bb_2,\ldots,\bb_M\}$ such that no $\bb_j$ appears in two distinct subspace bases. This yields a set of $N$ subspaces $\{\mW_{1},\mW_{2},\ldots, \mW_{N}\}$ so that the corresponding projection operators $\BP_{1},\BP_{2},\ldots,\BP_{N}$ are diagonal with $m$ ones in locations that are pairwise disjoint and $\sum_{j=1}^{N} \BP_{j}=\BI_M$. In the language of fusion frames (see, \eg, \cite[Sect. 1.3]{CK12}) we form a rather trivial {\it Parseval fusion frame}. 

\begin{definition}\label{def:FF}
A fusion frame for $\bR^{M}$ is a finite collection of subspaces $\{\mW_j\}_{j=1}^{N}$ in $\bR^{M}$ such that, for any $\bx \in \bR^{M}$, there exist constants $0 < A \leq B < \infty$ satisfying
\begin{equation}\label{eq:frame}
A \norm{\bx}^{2} \leq \sum_{j=1}^N \norm{\BP_j \bx}^2 \leq B \norm{\bx}^2, \ie, A \BI \leq \sum_{j=1}^{N} \BP_j \leq B \BI.
\end{equation}
It is tight if $A=B$ and a tight fusion frame is a Parseval frame when $A=1$. Here $\norm{\bx}$ denotes the length or the modulus of $\bx$ and matrix inequality is defined according to the entries in their corresponding positions.
\end{definition} 

\begin{wrapfigure}{r}{0.48\textwidth}
\vspace{-0.8cm}
\begin{center}
\includegraphics[width=0.45\textwidth]{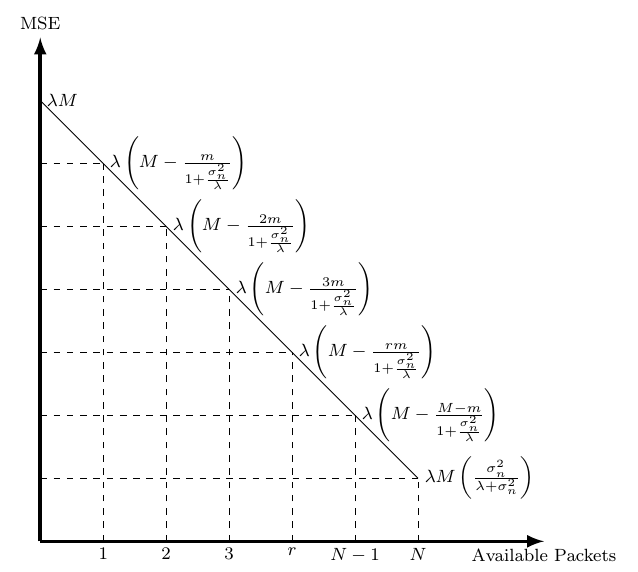}
\end{center}
\vspace{-0.4cm}
\caption{The $\MSE$ Curve for the Toy Example}\label{fig1}
\vspace{-0.8cm}
\end{wrapfigure}

In the case discussed, each data packet $\bz$ provides information on $\bx$, giving estimates for $m$ entries in $\bx$. From Proposition~\ref{prop:basic}, the optimal mean squared error of estimating $\bx$ from a single frame is
$\displaystyle{\MSE_{1\mbox{ packet}}=\lambda \left(M-\frac{m}{1+\frac{\sigma_n^2}{\lambda}}\right)}$. 
Getting $r$ pieces of data, \ie, some $\bz_{k_1},\bz_{k_2},\ldots,\bz_{k_r}$ means having a bigger projection subspace of dimension $r \cdot m$, yielding 
\[
\MSE_{r \mbox{ packets}}=\lambda \left(M-\frac{r \cdot m} {1+\frac{\sigma_n^2}{\lambda}}\right).
\]
The availability of all $N$ packets results in estimating $\bx$ with mean squared error
\[
\MSE_{\mbox{all packets}}=\lambda \left(M-\frac{N \cdot m}{1+\frac{\sigma_n^2}{\lambda}}\right)= \frac{M \sigma_n^2}{1+\frac{\sigma_n^2}{\lambda}}
\] 
with perfect recovery of $\bx$ as $\sigma_n^2 \to 0$. We have achieved our dream of having a perfect solution with a holographic representation that satisfies all of our requirements. The data packets are $\{\bz_j : j \in \bbra{N}\}$ and their performance is ideal. The best estimate of $\bx$ is reached when it is probed with all $N$ projections, \ie, when all $N$ packets are available. Figure~\ref{fig1} shows that any data set of $r \in \bbra{N}$ packets yield the same $\MSE$. 

The case we have just analysed, albeit being trivial, explains our aim clearly. The general case, when subspaces of the projections intersect and their bases do not necessarily align with the standard basis for the data vector, poses several interesting challenges. 

Our general design philosophy in allocating the subspaces is as follows. First, we want the subspace arrangements that produce the best possible $\MSE$ when all $N$ packets are available. Among the candidates satisfying this requirement we select one that has an overall smoothness property in the recovery when the number of available measurement packets is between $1$ and $N-1$. Smoothness is computed based on the relative variances of the $\MSE$ reductions, given any $\ell$ packets selected from all of the projections. We will discuss the numerical methods to come up with suitable choices below. 

\section{The Aligned Case}\label{sec:align}

This section considers the three cases of $\BR_{xx}$ when the projections on intersecting subspaces have bases that are still aligned with the standard basis representation of $\bR^M$ for $\bx$. If several data packets are available, we can apply the Wiener filter and then compute the general formula for the error from the observation
\[
\bz_{\mbox{combi}}=
\underbrace{\begin{pmatrix}
	\bz_{k_1}\\
	\bz_{k_2}\\
	\ldots\\
	\bz_{k_{\ell}}
	\end{pmatrix}}_{(\ell \cdot m) \times 1}
=\underbrace{\begin{pmatrix}
	\BU_{k_1}^{\top}\\
	\BU_{k_2}^{\top}\\
	\ldots\\
	\BU_{k_{\ell}}^{\top}\\
	\end{pmatrix}}_{(\ell \cdot m) \times M} \bx + 
\underbrace{\begin{pmatrix}
	\bn_{k_1}\\
	\bn_{k_2}\\
	\ldots\\
	\bn_{k_{\ell}}\\
	\end{pmatrix}}_{(\ell \cdot m) \times 1}.
\]
The white noise is by assumption i.i.d. with variance $\sigma_n^2 \BI_{(\ell \cdot m)}$. The $M \times M$ {\it combined} projection matrix is 
\[
\BP_{\mbox{combi}}=
\underbrace{\left(\BU_{k_1}|\BU_{k_2}|\ldots|\BU_{k_{\ell}}\right)}_{\triangleq \BU}
\begin{pmatrix}
\BU_{k_1}^{\top}\\
\BU_{k_2}^{\top}\\
\vdots\\
\BU_{k_{\ell}}^{\top}\\
\end{pmatrix}
=\sum_{j =1}^{\ell} \BU_{k_j} \BU_{k_j}^{\top}=
\sum_{j=1}^{\ell} \BP_{k_j},
\] 
yielding the error covariance matrix
\begin{align}\label{eq:A_Ree}
\BR_{ee}&=\left(\BR_{xx}^{-1}+\frac{1}{\sigma_n^2} \sum_{j=1}^{\ell} \BP_{k_j}\right)^{-1}=\left(\BR_{xx}^{-1}+\frac{1}{\sigma_n^2} \BU \BU^{\top}\right)^{-1} \notag \\ 
&=
\BR_{xx} - \BR_{xx} \frac{1}{\sigma_n^2} \BU \left(\BI + \BU^{\top} \BR_{xx} \frac{1}{\sigma_n^2} \BU\right)^{-1} \BU^{\top} \BR_{xx}.
\end{align}

Let $\mW_{k_j}$ be the subspace onto which $\BP_{k_j}$ projects. Suppose that $\mW_{k_j}$ for each $j \in \bbra{\ell}$ is aligned with the standard basis $\{\bb_1,\bb_2,\ldots,\bb_M\}$, \ie, $\BP_{k_j}$ is a diagonal matrix with diagonal entries $1$ or $0$ corresponding, respectively, to whether a certain coordinate of $\bx$ is probed or not. This implies that $\sum_{j=1}^k \BP_{k_j}$ is also diagonal with nonnegative integer diagonal entries displaying how often a certain coordinate of $\bx$ was probed. Then the $\BR_{ee}$ has a pleasingly simple formula for its trace that gives the expected $\MSE$ from the Wiener filter recovery. Let $\cP_s$ for $s \in \bbra{0,\ell}$ be the set of positions in the diagonal of $\BP_{\mbox{combi}}$ whose entries are $s$. Note that $\sum_{s=0}^{\ell} \abs{\cP_s}= M$ and $\sum_{s=1}^{\ell} s \abs{\cP_s}= \ell \cdot m$.

\vspace{0.3cm}
\noindent {\bf Case 1}: 
Let $\BR_{xx}=\lambda \BI_M$. Assume that there are $\ell \in \bbra{N}$ arbitrary measurement packets available to approximate $\bx$. Then $\displaystyle{
	\BR_{ee}=\left(\BR_{xx}^{-1}+\frac{1}{\sigma_n^2} \sum_{j=1}^{\ell} \BP_{k_j}\right)^{-1}}$ is diagonal with positive entries $\alpha_j=\displaystyle{\frac{\lambda \sigma_n^2}{\sigma_n^2+s \lambda}}$ for $j \in \cP_s$ with $s \in \bbra{0,\ell}$. Hence, $\Tr(\BR_{ee})$ is 
\begin{multline}\label{eq:MSElambda}
\MSE(\lambda \BI_{M},\sigma_n^2,\ell)=\sum_{s=0}^{\ell} \sum_{j \in \cP_s} \frac{\lambda \sigma_n^2}{\sigma_n^2+s \lambda}=
\sum_{s=0}^{\ell} \abs{\cP_s} \lambda \left(1-\frac{s \lambda}{\sigma_n^2+ s \lambda}\right)\\
= \sum_{s=0}^{\ell} \abs{\cP_s} \lambda - \sum_{s=1}^{\ell} \abs{\cP_s} 
\frac{s \lambda^2}{\sigma_n^2 + s \lambda} 
= M \lambda -\sum_{s=1}^{\ell} \frac{\abs{\cP_s} \lambda}{1+\frac{\sigma_n^2}{s \lambda}}. 
\end{multline}

\begin{remark}
In the toy example of Section~\ref{sec:prelim}, $s\in \{1\}$ with $\abs{\cP_1}=\frac{\ell \cdot M}{N}=\ell \cdot m$ is the only possibility. Hence, $\MSE(\lambda,\sigma_n^2,\ell)=\lambda M 
\displaystyle{\left(1 - \frac{\lambda \ell}{N (\lambda+\sigma_n^2)}\right)}$, as had been shown. 
\end{remark}

\vspace{0.3cm}
\noindent {\bf Case 2}: Let  $\BR_{xx}=\Lambda=\diag(\lambda_1,\lambda_2,\ldots,\lambda_M)$ and assume that all $\BP_{j}$ with $j \in \bbra{N}$ are projections onto subspaces of equal dimension $m$, \ie, $\Tr(\BP_{j})=m$. Given all $N$ packets,  
\[
\sum_{j=1}^{N} \BP_j =\diag(s_1,s_2,\ldots,s_M) \mbox{ with } 0 \leq s_j \in \bZ \mbox{ and } \sum_{j=1}^N s_j = N \cdot m.
\] 
To determine the values of $s_j$ that minimize the $\MSE$ we start from   (\ref{eq:A_Ree}) to infer that $\BR_{ee}=
\displaystyle{\left(\diag(\lambda_1^{-1},\lambda_2^{-1},\ldots,\lambda_M^{-1}) + \frac{1}{\sigma_n^2} \diag(s_1,s_2,\ldots,s_M)\right)^{-1}}$, which implies

\begin{equation}\label{eq:MSE_N}
\MSE(\Lambda,\sigma_n^2,N)=\sum_{j=1}^M \frac{\sigma_n^2 \lambda_j}{\sigma_n^2 + \lambda_j s_j}=\sum_{j=1}^M \frac{\lambda_j}{1 + \left(\frac{\lambda_j}{\sigma_n^2}\right) s_j}.
\end{equation}

Let us now minimize the $\MSE$ given in (\ref{eq:MSE_N}) when all probings are made available. To achieve this we solve the optimization problem $\displaystyle{\min_{\{\zeta_j\}} \MSE(\Lambda,\sigma_n^2,N)}$ using the Lagrange multipliers method, subject to $\sum_{j=1}^M \zeta_j = N \cdot m$ and $0 \leq \zeta_j \in \bR$. Let 
\[
\Theta(\zeta_1,\zeta_2,\ldots,\zeta_M) \triangleq 
\sum_{j=1}^M \frac{\lambda_j \sigma_n^2}{\sigma_n^2 + \lambda_j \zeta_j} + 
\beta \left(\sum_{j=1}^M \zeta_j - N \cdot m\right) \mbox{ with } \beta >0.
\]
Solving for $\zeta_j$ in $\displaystyle{\frac{\partial \Theta}{\partial \zeta_j}=-\frac{\lambda_j^2 \sigma_n^2}{\left(\sigma_n^2+\lambda_j \zeta_j\right)^2}+ \beta=0 \mbox{ yields }
\left(\sigma_n^2+\lambda_j \zeta_j\right)^2 =\frac{\lambda_j^2 \sigma_n^2}{\beta}}$, implying   
$\displaystyle{\zeta_j = \frac{\sigma_n}{\sqrt{\beta}}-\frac{\sigma_n^2}{\lambda_j}}$. From 
$\displaystyle{
\sum_{j =1}^M \zeta_j = M \frac{\sigma_n}{\sqrt{\beta}}-\sigma_n^2 \left(\sum_{j=1}^{M} \frac{1}{\lambda_j}\right)=N \cdot m}$ one obtains 
\[
M \frac{\sigma_n}{\sqrt{\beta}} = N \cdot m + \sigma_n^2 
\left(\sum_{j=1}^{M} \frac{1}{\lambda_j}\right) \implies 
\sqrt{\beta}= \frac{M\sigma_n}{N \cdot m + \sigma_n^2 \left(\sum_{j=1}^{M} \frac{1}{\lambda_j}\right)}.
\] 
We can then conclude that
\begin{equation}\label{eq:zetaj}
\zeta_j = \frac{\sigma_n}{M \sigma_n} \left[N \cdot m + \sigma_n^2 \left(\sum_{k=1}^{M} \frac{1}{\lambda_k}\right)\right] - \frac{\sigma_n^2}{\lambda_j}=
\frac{N \cdot m}{M}+\sigma_n^2 \left(\frac{1}{M} \sum_{k=1}^{M} \frac{1}{\lambda_k} - \frac{1}{\lambda_j} \right).
\end{equation}
The second derivative test on $\Theta$ confirms that $\zeta_j$ is indeed a local minimizer since 
$\displaystyle{
\frac{\partial^2 \Theta}{\partial \zeta_j^2}=\frac{2 \lambda_j^3 \sigma_n^2}{(\sigma_n^2+\lambda_j \zeta_j)^3} > 0}$. Thus, the optimal $\zeta_j$, in the sense of the one leading to the least $\MSE$, measures the departure of $\displaystyle{\frac{1}{\lambda_j}}$ from the {\it average} contribution $\displaystyle{\frac{1}{M}\sum_{k=1}^{M} \frac{1}{\lambda_k}}$. 
Note that in (\ref{eq:zetaj}), there may be a threshold $t$ such that $\zeta_j \geq 0$ for $j \in \bbra{t}$ and $\zeta_j <0$ for $j \in \bbra{t+1,M}$. Applying the constraint $\zeta_j \geq 0$, we set $\zeta_j=0$ for $j \in \bbra{t+1,M}$. To ensure that $\sum_{j=1}^M \zeta_j = N \cdot m$ still holds when there is such a $t$, we recompute $\sqrt{\beta}_t \triangleq 
\displaystyle{\frac{t \cdot \sigma_n}{N \cdot m + \sigma_n^2 \left( \sum_{j=1}^{t} \frac{1}{\lambda_j}\right)}}$ and use it to determine the new $\displaystyle{\zeta_j = \frac{\sigma_n}{\sqrt{\beta}_t}-\frac{\sigma_n^2}{\lambda_j}}$. The process is repeated until all $\zeta_j \geq 0$ for all $j \in \bbra{M}$. Finally, we round each $\zeta_j$ off to get $ 0 \leq s_j \in \bN$. 

After ensuring that we obtain the best possible recovery when all $N$ packets are available, we would now like to have a graceful degradation when any $\ell \in \bbra{N}$ packets, say $\bz_{k_1},\bz_{k_2},\ldots, \bz_{k_{\ell}}$, are available. Then $\displaystyle{\BR_{ee}=\left(\Lambda^{-1}+\frac{1}{\sigma_n^2} \sum_{j=1}^{\ell} \BP_{k_j}\right)^{-1}}$ is a diagonal matrix with positive entries $\displaystyle{\frac{\lambda_j \sigma_n^2}{\sigma_n^2+s \lambda_j}}$ for $j \in \cP_s$ with $s \in \bbra{0,\ell}$. Taking the trace yields
\begin{align}\label{eq:l_align}
\MSE(\Lambda,\sigma_n^2,\ell)&=\sum_{s=0}^{\ell} \sum_{j \in \cP_s} \frac{\lambda_j \sigma_n^2}{\sigma_n^2+s \lambda_j}
= \sum_{j \in \cP_0} \lambda_j + \sum_{j \notin \cP_0} \lambda_j - \sum_{j \notin \cP_0} \lambda_j  +
\sum_{s=1}^{\ell} \sum_{j \in \cP_s} \frac{\lambda_j \sigma_n^2}{\sigma_n^2+s \lambda_j} \notag \\
&=\sum_{j=1}^M \lambda_j +\sum_{s=1}^{\ell} \sum_{j \in \cP_s} 
\left(\frac{\lambda_j \sigma_n^2}{\sigma_n^2+s \lambda_j} - \lambda_j \right)=
\sum_{j=1}^M \lambda_j - \sum_{s=1}^{\ell} \sum_{j \in \cP_s} 
\frac{s \lambda_j^2} {\sigma_n^2 + s \lambda_j}.
\end{align}
In particular, when all $N$ packets are available, we get (\ref{eq:MSE_N}) from (\ref{eq:l_align}) since, by design, 
\begin{equation}\label{eq:N_align}
\MSE(\Lambda,\sigma_n^2,N)=\sum_{s=0}^{N} \sum_{j \in \cP_s} \frac{\lambda_j \sigma_n^2}{\sigma_n^2+s \lambda_j}=
\sum_{j=1}^M \frac{\lambda_j \sigma_n^2}{\sigma_n^2 + s_j \lambda_j}.
\end{equation}

\section{The Unaligned Case}\label{sec:unaligned}

We move to the more general setup. Let $\BP_{w}$ be any orthogonal projection operator, \ie, $\BP_{w} = \BU_{w} \BU_w^{\top}$ where $\BU_{w}$ is an arbitrary orthonormal basis for the subspace onto which $\BP_{w}$ projects. First, let $\BR_{xx}=\lambda \BI_M$. The formula when probing $\bx$ by projecting onto all subspaces described via $\BU_{1}, \BU_{2}, \ldots, \BU_{N}$ of respective dimensions $m_{j}$ for $j \in \bbra{N}$ is $\displaystyle{\BR_{ee}=\left(\frac{1}{\lambda} \BI_M + \frac{1}{\sigma_n^2} \sum_{j=1}^N \BP_{j}\right)^{-1}}$. Suppose that only a single subspace projection, say $\BP_{w}$, is made available. Then, by applying Proposition~\ref{propW} with $\BA=\lambda^{-1} \BI_{M}$, $C=\frac{1}{\sigma_n^2} \BU_{w}$, and $D=\BU_{w}^{\top}$, the error covariance matrix $\BR_{ee}$ is given by 
\[
\left(\frac{1}{\lambda} + \frac{1}{\sigma_n^2} \BP_{w}\right)^{-1}= 
\left(\frac{1}{\lambda} \BI_M + \frac{1}{\sigma_n^2} \BU_{w} \BU_{w}^{\top}\right)^{-1}=
\lambda \BI_{M} -\frac{\lambda^2}{\sigma_n^2} \BU_{w} \left(\BI_{m_w} + \frac{\lambda}{\sigma_n^2} \BI_{m_w}\right)^{-1} \BU_{w}^{\top}.
\]
Taking the trace establishes 
$\displaystyle{\MSE=\lambda M - m_w \frac{\lambda^2}{\sigma_n^2+\lambda}}$.
Thus, if we want all subspaces in the given setup to provide the same mean squared error reduction, they must have equal dimension $m_w=m$. 

If $\ell \geq 2$ probing packets of data are available, then $\displaystyle
\BR_{ee}= \left(\frac{1}{\lambda} \BI_{M}+ \frac{1}{\sigma_n^2} 
\sum_{j=1}^{\ell} \BP_{w_j} \right)^{-1}$ with $\BP_{w_j}=\BU_{w_j} \BU_{w_j}^{\top}$ being $M \times M$ matrices for all $j \in \bbra{\ell}$. 

When all $N$ packets are available we write $\displaystyle
\BR_{ee}= \left(\frac{1}{\lambda} \BI_{M}+ \frac{1}{\sigma_n^2} 
\sum_{j=1}^{N} \BP_{j} \right)^{-1}$ by letting $\widetilde{\BU} \triangleq (\BU_{1}|\BU_{2}|\ldots|\BU_{N})$. Since $\displaystyle{\sum_{j=1}^{N} \BP_{j}}$ is symmetric positive semidefinite, its trace is $N \cdot m$. In other words, if $\{\zeta_t : t \in \bbra{M}\}$ is the set of all of its eigenvalues, then $\sum_{t=1}^{M} \zeta_t= N \cdot m$ and there exists an orthogonal matrix $\Psi$ such that $\displaystyle{
\sum_{j=1}^{N} \BP_{j} = \Psi \diag(\zeta_1,\zeta_2,\ldots,\zeta_M) \Psi^{\top}}$. Hence, 
\begin{align*}
\MSE&=\Tr\left(\left(\frac{1}{\lambda} \Psi \Psi^{\top}+\frac{1}{\sigma_n^2} \Psi \diag(\zeta_1,\zeta_2,\ldots,\zeta_M) \Psi^{\top}\right)^{-1}\right)\\
&=\Tr\left(\left(\frac{1}{\lambda} \BI_{M} + \frac{1}{\sigma_n^2} \diag(\zeta_1,\zeta_2,\ldots,\zeta_M) \right)^{-1}\right)
=\sum_{t=1}^{M} \frac{\lambda \sigma_n^2}{\sigma_n^2+\lambda \zeta_t}.
\end{align*}

Solving $\displaystyle{
\min_{\{\zeta_t\}} \sigma_n^2 \sum_{t=1}^M \frac{1}{\zeta_t+\frac{\sigma_n^2}{\lambda}}}$ such that 
$\displaystyle{\sum_{t=1}^{M} \zeta_t =N \cdot m}$ gives us the minimum achievable $\MSE$. Let 
$\displaystyle{
\Upsilon(\zeta_1,\zeta_2,\ldots,\zeta_M)\triangleq \sum_{t=1}^M \frac{\lambda \sigma_n^2}{\sigma_n^2 + \lambda \zeta_t} + \alpha \left(\sum_{t=1}^M \zeta_t - N \cdot m\right)}$. Solving for $\zeta_t$ in
$\displaystyle{
\frac{\partial \Upsilon}{\partial \zeta_t} =\alpha-\frac{\lambda^2 \sigma_n^2}{(\sigma_n^2+\lambda \zeta_t)^2}=0}$ gives 
$\zeta_t=\displaystyle{\frac{\sigma_n}{\sqrt{\alpha}}-\frac{\sigma_n^2}{\lambda}}$ and $\displaystyle{
\sum_{t=1}^M \zeta_t = M \left(\frac{\sigma_n}{\sqrt{\alpha}}-\frac{\sigma_n^2}{\lambda}\right)
= N \cdot m}$ leads to $\displaystyle{
\frac{\sigma_n}{\sqrt{\alpha}}=\frac{N \cdot m}{M}-\frac{\sigma_n^2}{\lambda}}$. Hence, $\zeta_t = \frac{N \cdot m}{M}$. 
To see that this value is indeed a local minimum, notice that the second derivative $\displaystyle{
\frac{\partial^2 \Upsilon}{\partial \zeta_t^2} = \frac{2 \lambda^3 \sigma_n^2}{(\sigma_n^2+\lambda \sigma_t)^3}>0}$. Thus, to minimize the error, we need to make $\zeta_t$ as uniform as possible for all $t \in \bbra{M}$. In particular, it is desirable to have $M \mid (N \cdot m)$, \ie, to have an $A$-tight fusion frame with $A = \frac{N\cdot m}{M}$. The local minimum value for $\MSE$ is, in this case, 
\begin{equation}\label{eq:MSE_tight}
\sum_{t=1}^{M} \frac{\lambda \sigma_n^2}{\sigma_n^2+\lambda \left(\frac{N \cdot m}{M}\right)}=\frac{M^2 \lambda \sigma_n^2}{\lambda N \cdot m +  M \sigma_n^2}.
\end{equation}

In the general case where $\bx$ has $\BR_{xx}=\Psi \Lambda \Psi^{\top}$ with all $N$ packets available, we use projections of the form $\bz_j=\BU_{w_j}^{\top} \by$ with $\by \triangleq \Psi^{\top} \bx$, making $\BR_{yy}=\Lambda$. In this case, $\displaystyle{\BR_{ee}=\left(\BR_{xx}^{-1}+\frac{1}{\sigma_n^2} \sum_{j =1}^N \widetilde{\BP}_{w_j}\right)^{-1}}$ with $\displaystyle{\sum_{j =1}^N \widetilde{\BP}_{w_j}=\Psi \left(\sum_{j =1}^N \BP_{w_j}\right) \Psi^{\top}}$. Hence,
\[
\BR_{ee}=\left(\Psi \Lambda^{-1} \Psi^{\top}+ \frac{1}{\sigma_n^2} \Psi \sum_{j =1}^N \BP_{w_j} \Psi^{\top}\right)^{-1}=
\Psi \left(\Lambda^{-1}+\frac{1}{\sigma_n^2}\sum_{j =1}^N \BP_{w_j}\right)^{-1} \Psi^{\top},
\]
yielding the same $\MSE(\Lambda,\sigma_n^2,N)$ as the one already determined in (\ref{eq:N_align}). Similar reasoning yields the same formula for $\MSE(\Lambda,\sigma_n^2,\ell)$ already deduced in (\ref{eq:l_align}). 

\section{Cyclostationary Data Vectors}\label{sec:cyclo}
This section considers data whose statistical characteristics vary periodically with time. The processes that produce such data are said to be {\it cyclostationary} or {\it periodically correlated}. They are abundant in econometry, telecommunication, and astronomy. Relevant definitions, prominent examples, and further references are available in~\cite{GNP06}.

Henceforth, $i \triangleq \sqrt{-1}$ and $\omega \triangleq 
e^{- i \frac{2 \pi}{M}}$, which is a primitive $M$-th root of unity. Here we have $M$ a power of $2$ and the correlation matrix $\BR_{xx}$ is circulant with first row entries, for some $ 0 < \gamma \in \bR$:
\[
1, \gamma, \gamma^2,\ldots, \gamma^{\frac{M}{2}-1},\gamma^{\frac{M}{2}}, \gamma^{\frac{M}{2}-1},\ldots,\gamma^2,\gamma.
\]
As a consequence of the Circular Convolution Theorem from the theory of Discrete Fourier Transforms, we can write $\BR_{xx}$ as $\mF \Lambda  \mF^{\dagger}$ where $\mF$ is a (unitary) DFT matrix with entries $\mF_{j,k}=\frac{\omega^{(j-1)(k-1)}}{\sqrt{M}}$ for $j,k \in \bbra{M}$, \ie,
\begin{equation}\label{eq:Fmatrix}
\mF \triangleq \frac{1}{\sqrt{M}}
\begin{pmatrix}
1 & 1 & 1 & \ldots & 1\\
1 & \omega & \omega^2 & \ldots & \omega^{M-1}\\
1 & \omega^2 & \omega^4 & \ldots & \omega^{2(M-1)}\\
\vdots & \vdots & \vdots & \vdots & \vdots\\ 
1 & \omega^{M-1} & \omega^{2(M-1)} & \ldots & \omega^{(M-1)(M-1)}\\
\end{pmatrix},
\end{equation}
and $\Lambda$ a diagonal matrix $\diag(\lambda_1,\lambda_2,\ldots,\lambda_M)$. Note that the entries $\lambda_j$s are no longer monotonically nonincreasing. Let $\bc:=(1,\gamma,\gamma^2,\ldots,\gamma^{\frac{M}{2}-1},\gamma^{\frac{M}{2}},\gamma^{\frac{M}{2}-1},\ldots,\gamma^2,\gamma)^{\top}$ be a vector in $\bR^{M}$. We use a well-known result~\cite[Theorem 4.8.2]{Golub2013} to conclude that the diagonal entries in $\Lambda$ are the elements in vector $\mF^{\dagger} \bc$. Since $\mF$ is unitary, after some manipulation we obtain

\begin{equation}\label{eq:genlam}
\lambda_j = \frac{1}{\sqrt{M}} \left(1 + (-1)^{j-1} \gamma^{\frac{M}{2}} 
+ \sum_{k=1}^{\frac{M}{2}-1} \gamma^{k} 2 \cos \left(\frac{2 \pi k (j-1)}{M}\right) \right) \mbox{ for } j \in \bbra{M}.
\end{equation}
Hence, $\lambda_j = \lambda_{M+2-j}$ for $j \in \bbra{2,M/2}$ and $\sum_{j=1}^M \lambda_j = \sqrt{M} = \sqrt{\Tr(\BR_{xx})}$. 

\begin{example}
For $M=4$, we have $\bc^{\top}=(1,\gamma,\gamma^2,\gamma)$ and 
\[
\mF^{\dagger}= \frac{1}{2}
\begin{pmatrix}
1 & 1 & 1 & 1\\
1 & i & -1 & -i\\
1 & -1 & 1 & -1\\
1 & -i & -1 & i\\
\end{pmatrix}
\mbox{, making } 
\begin{cases}
\lambda_1=(1+2\gamma+\gamma^2)/2,\\
\lambda_2=\lambda_4=(1-\gamma^2)/2,\\
\lambda_3=(1-2\gamma+\gamma^2)/2.
\end{cases}
\]
\end{example}

The measurements are given by 
$\displaystyle{\bz_k = \BU_k^{\top} \mF^{\dagger} \bx + \widetilde{\bn}_k \triangleq \BU_k^{\top} \by + \widetilde{\bn}_k}$ for $k \in \bbra{M}$. Now that we have $\BR_{yy}=\EE[\by \by^{\top}]=\mF^{\dagger} \left( \mF \Lambda \mF^{\dagger}\right) \mF=\Lambda$, the Wiener Theory allows for the derivation of the expected error when measurements from $\ell$ arbitrary subspaces are available. As before, let $\bz_{\mbox{combi}}$ and $\BU_{\mbox{combi}}$ be the respective concatenations of available $\bz_{k_j}$ and $\BU_{k_j}$ for $j \in \bbra{\ell}$. The $\MSE$ in this case is exactly the same as the $\MSE$ in the aligned case. 
This follows since   
$\displaystyle{
\BR_{ee}=\EE \left[(\by - \BF \bz_{\mbox{combi}})(\by - \BF \bz_{\mbox{combi}})^{\dagger}\right]=\BR_{yy} -\BR_{yz} \BR_{zz}^{-1} \BR_{zy}}$
with 
\[
\BR_{yy}=\Lambda, \,
\BR_{yz}=\BR_{yy} \BU_{\mbox{combi}} =\Lambda \sum_{j \in 1}^{\ell} \BP_{k_j} \mbox{, and } 
\BR_{zz}=\BU_{\mbox{combi}}^{\top} ~\Lambda~ \BU_{\mbox{combi}} + \sigma_n^2 \BI_{(\ell \cdot m)}.
\] 
The derivation of the $\MSE$ follows the steps done in Section~\ref{sec:align}. Thus, $\MSE(\Lambda,\sigma_n^2,\ell)$ is the one given in (\ref{eq:l_align}) while $\MSE(\Lambda,\sigma_n^2,N)$ is in (\ref{eq:N_align}), with $\Lambda$ as defined in this section. To get the best approximation $\widetilde{\bx}$ of $\bx$, we apply $\mF$ on the approximation $\widetilde{\by}$ of $\by$.

\section{Computational Implementation}\label{sec:comp}

We implement the holographic sensing design computationally in a program written in {\tt python 2.7}. The program has three different modes, namely, standard, linear, and cyclostationary, in correspondence with the different models of $\Lambda$. On input $(M,m,N,\Lambda,\sigma_n^2)$ the program determines $\zeta_k \geq 0$ for $k \in \bbra{M}$ and then computes for the absolute distance of each $\zeta_k$ to the nearest integer for a proper rounding off of $\zeta_k$ to $s_k$. To ensure that $\sum_{k=1}^M s_k = N \cdot m$, there may be values of $\zeta_j$ with relatively large distance that need to be assigned to $\floor{\zeta_j}$. The program then computes for $\MSE(N)$ from (\ref{eq:MSE_N}). For a given $M$ we call the constant term $\sum_{j=1}^M \lambda_j$ in~(\ref{eq:l_align}) the {\it base point}. To highlight the {\it gain in recovery} as more packets are made available, we call $\displaystyle{\Delta(\ell) \triangleq \sum_{s=1}^{\ell} \sum_{j \in \cP_s} \frac{s \lambda_j^2} {\sigma_n^2 + s \lambda_j}}$ the $\MSE(\ell)$ {\it reduction}, which we want to maximize.

For relatively small values of $(M,m,N)$ users may choose to generate all subspace arrangements. The program comes with an option to specify a number, say $100$, of arrangements with maximal $\Delta(N)$ for each input parameter set to be uniformly generated. Two plots are produced to illustrate, respectively, the minimum $\MSE(\ell)$ and the \emph{variance} of $\Delta(\ell)$ for $\ell \in \bbra{N-1}$. The subspace arrangements are ranked from smoothest, \ie, the one with smallest normalized $\ell_2$-norm of the variances of the $\Delta(\ell)$ to the largest. The selected best arrangement is represented by the corresponding bold curves in the plots. As expected, the smoothest arrangement, while not lagging far behind, is usually not the best-performing in terms of the $\MSE$ reduction gain $\Delta(\ell)$ for each chosen $\ell$.

The {\it smoothness threshold} $\delta_{\epsilon}$ specifies the minimum number of available packets such that all subspace arrangements have variances of their $\Delta(\ell)$ reductions below $\epsilon$. If a user can tolerate $\epsilon = 0.1$, then $\delta_{0.1}$ gives the number of required packets to ensure that \emph{any arbitrarily chosen} subspace arrangement from the generated list is good enough. This works the other way as well. If at least a number of measurement packets always makes it through the channel, then one knows the variance of the $\MSE$ reductions that can be expected from using any subspace arrangement. 

The next three subsections explain how the program handles different types of data.

\subsection{A Typical Stochastic Data: $\lambda_j$ Decays Exponentially with $j$}

First, let us consider a typical stochastic data where $\Lambda=\diag(\lambda_1,\lambda_2,\ldots,\lambda_M)$ and $\lambda_j=\gamma^{j-1}$ for $0 < \gamma < 1$ and $j \in \bbra{M}$. On input $(M,m,N,\Lambda,\sigma_n^2)$ the program determines the largest positive integer $t \leq M$ such that $\zeta_k >0$ for $k \in \bbra{t}$ and then computes for the absolute distance of each $\zeta_k$ to the nearest integer for a proper rounding off of $\zeta_k$ to $s_k$, starting from the index corresponding to the lowest distance to the largest. A method to determine $t$ has been given in Section~\ref{sec:align}.

We start with a simple example. Let $M=8$, $m=4$, $N=5$, $\sigma_n^2=0.5$, and $\lambda_j=0.8^{j-1}$ for $j \in \bbra{8}$. Computation shows that $\zeta_j > 0$ for all $j$. Up to three significant figures, they are $3.24, 3.12, 2.96, 2.76, 2.52, 2.21, 1.83, 1.36$. Rounding off, we get $s_j=3$ for $j \in \bbra{5}$, $s_6=s_7=2$, and $s_8=1$. The maximal $\Delta(5)$ is $3.0864$, making $\MSE(5)= 1.075$ since the base point is $4.161$. There are $3770$ possible arrangements. The smoothest one, represented by its set of indices, is 
$\{\{1, 2, 7, 8\}, \{1, 3, 4, 7\}, \{1, 4, 5, 6\}, \{2, 3, 4, 5\}, 
\{2, 3, 5, 6\}\}$. It has normalized variance $0.0135$. As one can easily see, each of the first $5$ coordinates is sampled $3$ times, \ie, $s_j=3$ for $j \in \bbra{5}$, and so on until the last coordinate sampled only once, \ie, $s_8=1$.

Separately, we generate $300$ randomly selected arrangements having the required maximal $\Delta(5)$. In a particular run, the smoothest of these $300$, with normalized variance of $0.0147$ is 
$\{\{1, 2, 5, 8\}, \{1, 3, 4, 5\}, \{1, 4, 6, 7\}, \{2, 3, 4, 7\}, 
\{2, 3, 5, 6\}\}$. 
In both the exhaustive and random runs, if at least $3$ packets are guaranteed to be available, then \emph{any choice} of subspace arrangement has variance of $\MSE$ reductions less than $0.05$, {\it i.e.}, $\delta_{0.05}=3$. 

Figure~\ref{fig:M8} presents the respective sets of two plots, one for the exhaustive run and the other for the random run, for an easy comparison.

\begin{figure}[h!]
\hspace{-1.2cm}
\includegraphics[width=1.2\textwidth]{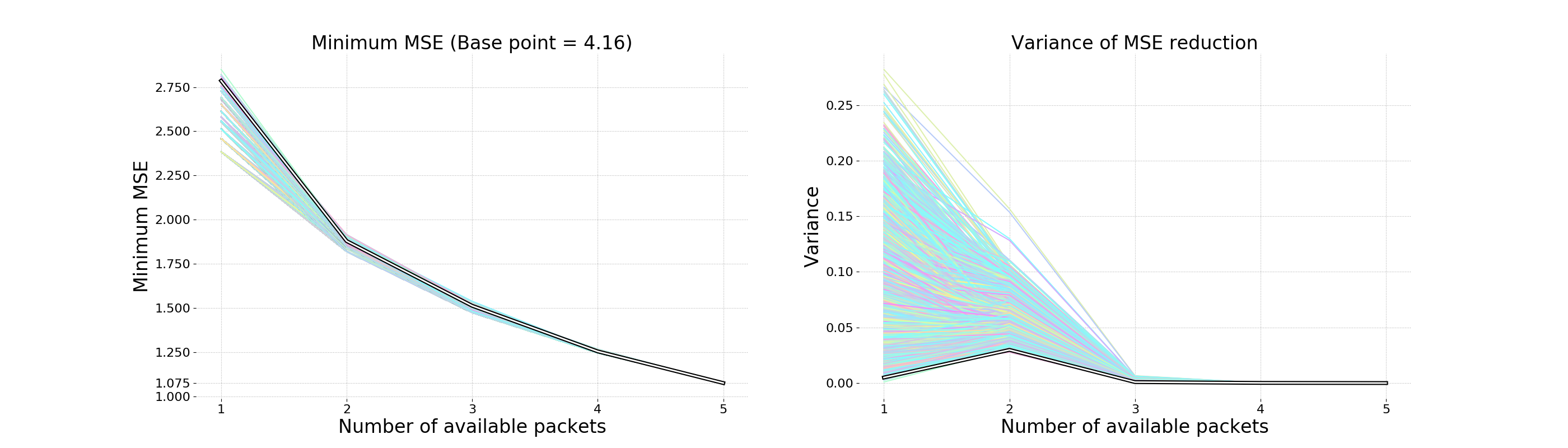}
\smallskip\par
\hspace{-1.2cm}
\includegraphics[width=1.2\textwidth]{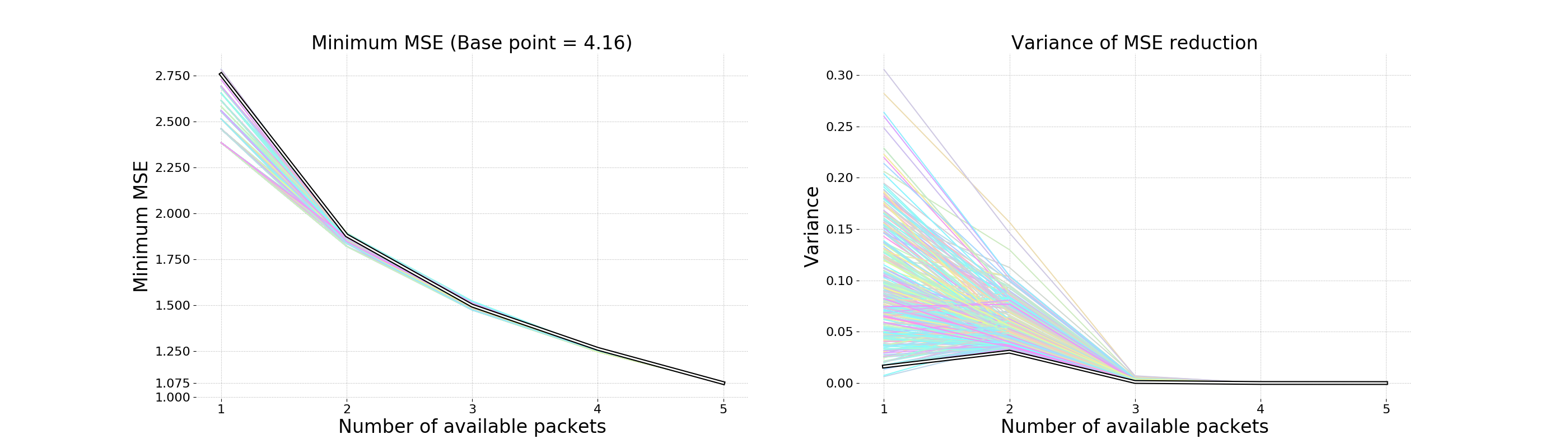}
\caption{A typical stochastic data with $M=8$, $m=4$, $N=5$, $\sigma_n^2=0.5$, and $\lambda_j=0.8^{j-1}$ for $j \in \bbra{8}$. The plot on the left depicts the $\MSE$ as a function of the number of $1 \leq \ell \leq N$ available packets for different arrangements. The base point is computed by setting $\ell=0$. The plot on the right shows the trend on the variance of the $\MSE$ reduction given the number of available packets. Above: all $3770$ arrangements. Below: uniformly selected $300$ arrangements.} \label{fig:M8}
\end{figure}

From Section~\ref{sec:align} it is clear that, regardless of the dimension $M$, given fixed $(\Lambda,\sigma_n^2)$, the values of $\MSE(N)$, $t$, and the set $\{s_j : j \in \bbra{t}\}$ depend only on $N \cdot m$. Table~\ref{table:fixed} illustrates the fact. The set $\{s_k : k \in \bbra{t}\}$ is written in shorthand with $[x_1]^{y_1}[x_2]^{y_2} \ldots [x_r]^{y_r}$ denoting $s_k=x_1$ for $k \in \bbra{y_1}$ followed by $s_k=x_2$ for $k \in \bbra{y_1+1,y_1+y_2}$ and so on until $s_k=x_r$ for $k \in \bbra{t-y_r+1,t}$. We remove the superscript if it is $1$. For example, Entry 1 in Table~\ref{table:fixed} has $[3][2]^{13}[1]^3$ in the specified column of $\{s_k : k \in \bbra{t}\}$ with $t=17$. This means that $s_1=3$, $s_k=2$ for $ 2 \leq k \leq 14$ and $s_k=1$ for $15 \leq k \leq 17$. 

\begin{table}[h!]
\caption{Computed values for typical stochastic data with $\lambda_j=0.8^{j-1}~:~j \in \bbra{M}$ and $\sigma_n^2=0.05$}
\label{table:fixed}
\setlength{\tabcolsep}{0.1cm}
\centering
\begin{tabular}{c c c c c c c c l c}
	\hline
	No. & $M$ & $N$ & $m$ & $\sum_{j=1}^M \lambda_j$ & Max $\Delta(N)$ & $\MSE(N)$ & $t$ & $\{s_k : k \in \bbra{t}\}$ & $\delta_{0.1}$ \\
	\hline
	1 & $64$ & $8$ & $4$ & $5.00$ & $4.53$ & $0.47$ & $17$ & $[3][2]^{13}[1]^3$ & $6$ \\
		
	2 &  	& $8$ & $8$ & &  $4.69$ & $0.31$ & $19$ & $[4]^{11}[3]^5 [2]^2 [1]$ & $4$ \\ 
		
	3 &     & $16$ & $4$ & & $4.69$ & $0.31$ & $19$ & $[4]^{11}[3]^5 [2]^2 [1]$ & $9$ \\
				
	4 & $1024$  & $8$ & $8$ &  & $4.69$ & $0.31$ & $19$ & $[4]^{11}[3]^5 [2]^2 [1]$ & $4$ \\
		
	5 & & $16$&$8$ & & $4.81$ & $0.19$ & $22$ & $[7]^{10}[6]^6 [5]^2 [4] [3]^2 [2]$ & $6$ \\
	
	6 & $2048$ & $16$ & $8$ & $5.00$ & $4.81$ & $0.19$ & $22$ & $[7]^{10}[6]^6 [5]^2 [4] [3]^2 [2]$ & $6$\\
		
	7 &  & $16$ & $10$ & & $4.83$ & $0.17$ & $23$ & $[9][8]^{13}[7]^3 [6]^2 [5] [4] [3] [2]$ & $5$ \\ 
	\hline
\end{tabular}
\end{table}

One can fix $M$, $N$, and $m$ while varying $\Lambda$ or $\sigma_n^2$. Table~\ref{table:vary} presents some results for $M=128$, $N=10$, $m=8$.

\begin{table}[h!]
\caption{Computed values for stochastic data with $M=128$, $N=10$, $m=8$, and $\lambda_j=\gamma^{j-1}$}
\label{table:vary}
\setlength{\tabcolsep}{0.15cm}
\centering
\begin{tabular}{ c c c c c c c l c}
\hline
No. & $\sigma_n^2$ & $\gamma$ & $\sum_{j=1}^M \lambda_j$ & Max $\Delta(N)$ & $\MSE(N)$ & $t$ & $\{s_k : k \in \bbra{t}\}$ & $\delta_{0.1}$ \\
\hline
1 & $0.05$ & $0.9$ & $10.00$ & $9.11$ & $0.89$ & $36$ & $[3]^{13} [2]^{18} [1]^5$ & $7$ \\ 

2 & 	   & $0.8$ & $5.00$ & $4.74$ & $0.26$ & $20$ & $[5]^{8}[4]^7 [3]^3 [2] [1]$ & $5$ \\
		
3 &  	   & $0.7$ & $3.33$ & $3.21$ & $0.12$ & $14$ & $[7]^{6}[6]^4 [5] [4] [3] [2]$ & $3$ \\
				
4 & $0.1$  & $0.9$ & $10.00$ & $8.68$ & $1.32$ & $32$ & $[3]^{20} [2]^8 [1]^4$ & $7$ \\ 
		
5 & 	   & $0.8$ & $5.00$ & $4.59$ & $0.41$ & $18$ & $[6]^{3} [5]^9 [4]^2 [3]^2 [2][1]$ & $4$ \\ 
		
6 &        & $0.7$ & $3.33$ & $3.14$ & $0.19$ & $13$ & $[8]^{5} [7]^3 [6][5][4][3][1]$ & $2$ \\
				
7 & $0.5$ & $0.9$ & $10.00$ & $7.01$ & $2.99$ & $22$ & $[5]^{7} [4]^7 [3]^3 [2]^3 [1]^2$ & $4$ \\
		
8 &      & $0.8$ & $5.00$ & $3.99$ & $1.01$ & $13$ & $[8]^{5} [7]^2 [6]^2 [5][4][3][2]$ & $2$ \\
		
\hline
\end{tabular}
\end{table}

Swapping $N$ and $m$ does not alter $\MSE(N)$, $t$ and $\{s_k : k \in \bbra{t}\}$. We keep $N$ and $m$ small compared to $M$ and use $m \leq N$ for smoother recovery, especially when few packets are available. Computation is longer for $m > N$ since, as $m$ increases, partitioning an $M$-dimensional space into subspaces of dimension $m$ requires exponentially more steps. The resulting plots confirm that the $\MSE$ reductions initially exhibit a larger fluctuation but converge relatively more rapidly when $m > N$. Figure~\ref{fig:interchange} illustrates the differences. 

\begin{figure}[t!]
\hspace{-1.2cm}
\includegraphics[width=1.2\textwidth]{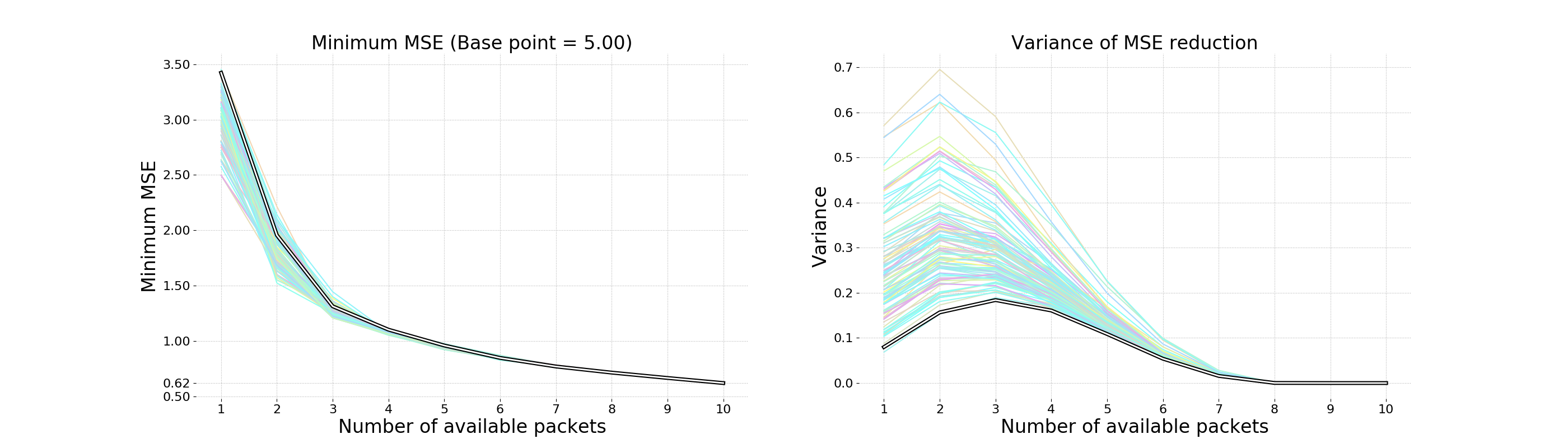}
\smallskip\par
\hspace{-1.2cm}
\includegraphics[width=1.2\textwidth]{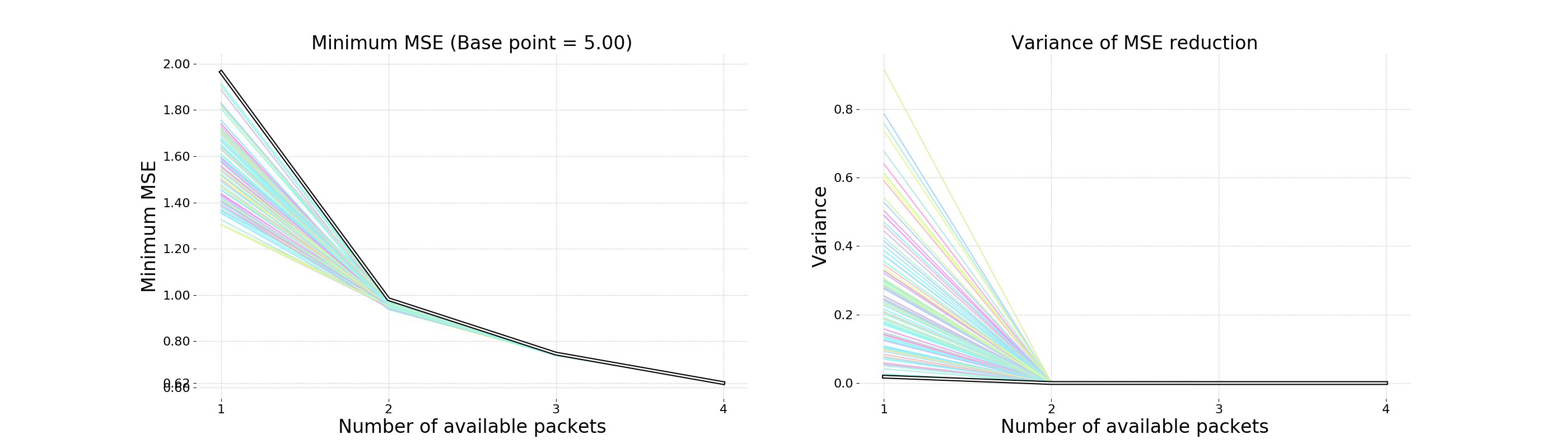}
\caption{Comparison when $N$ and $m$ are interchanged for a stochastic data with $M=1024$, $\sigma_n^2=0.1$, and $\lambda_j=0.8^{j-1}$. Above: $N=10$ and $m=4$. Below: $N=4$ and $m=10$.} \label{fig:interchange}
\end{figure}

\subsection{When $\lambda_j$ Decreases Linearly with $j$}

For $j \in \bbra{M}$, let $\displaystyle{\lambda_j = 1 - \frac{j-1}{M}}$ and $\Lambda=\diag(\lambda_1,\lambda_2,\ldots,\lambda_M)$. Such data is {\it linear}, with base point $(M+1)/2$, since the value of $\lambda_j$ decreases linearly with $j$. Compared to the data type in the preceding subsection, the output for linear data type shows higher variances in the $\MSE$ reduction among the subspace arrangements. The minimum $\MSE(\ell)$ values, however, are much closer to each other for any available $\ell$ packets. The coordinates are sampled more evenly as shown by the distribution of $s_k$s. Figure~\ref{fig:linear} presents the plots for the input $M=128$, $m=8$, $N=10$, and $\sigma_n^2=0.1$. Table~\ref{table:linear} has more examples.

\begin{figure}[h!]
\hspace{-1.2cm}
\includegraphics[width=1.2\textwidth]{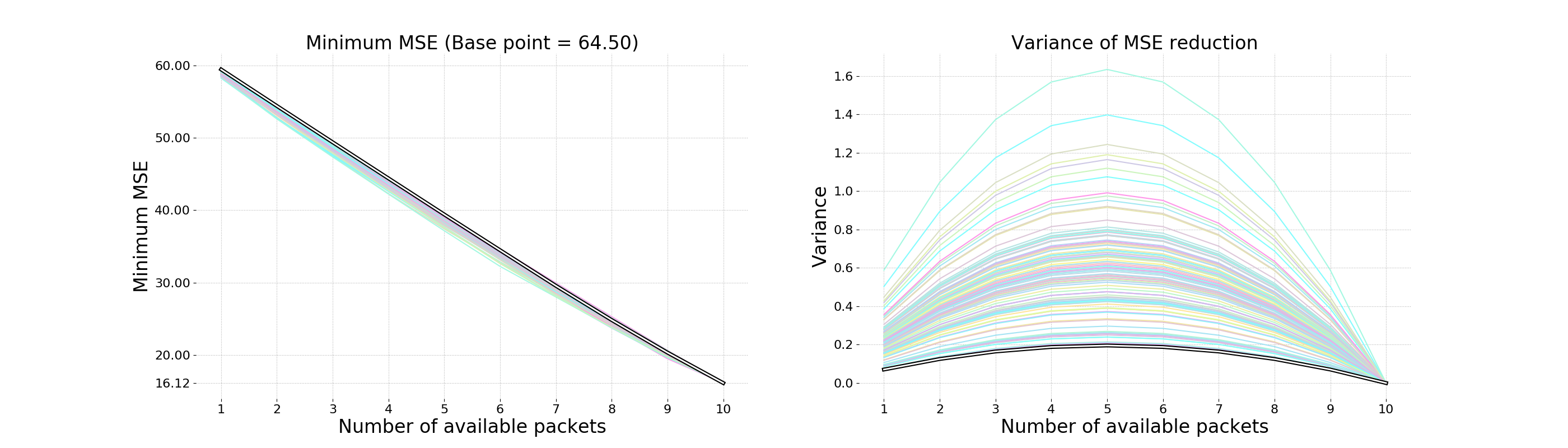}
\caption{Linear data with $M=128$, $m=8$, $N=10$, and $\sigma_n^2=0.1$}\label{fig:linear}
\end{figure}

\begin{table}[h!]
\caption{Computed values for linear data}
\label{table:linear}
\centering
\begin{tabular}{c c c c c c c c l c}
\hline
No. & $M$ & $N$ & $m$ & $\sigma_n^2$ & Max $\Delta(N)$ & $\MSE(N)$ & $t$ & $\{s_k : k \in \bbra{t}\}$ & $\delta_{0.1}$ \\
\hline
1 & $32$ & $4$ & $4$ & $0.05$ & $11.50$ & $5.00$ & $16$ & $[1]^{16}$ & $4$ \\
		
2 &  	& $8$ & $4$ & & $15.13$ & $1.37$ & $31$ & $[2][1]^{30}$ & $8$ \\ 
		
3 & $64$ & $8$ & $8$ & $0.05$ & $29.80$ & $2.70$ & $61$ & $[2]^3[1]^{58}$ & $8$ \\
		
4 &  &  &  & $0.1$ & $27.72$ & $4.78$ & $59$ & $[2]^5[1]^{54}$ & $8$ \\
		
5 & $128$  & $10$ & $8$ & $0.05$ & $51.60$ & $12.90$ & $80$ & $[1]^{80}$ & $10$ \\
		
6 & &  &  & $0.1$ & $48.38$ & $16.12$ & $80$ & $[1]^{80}$ & $10$ \\
		
7 & & &  & $0.5$ & $32.50$ & $32.00$ & $80$ & $[1]^{80}$ & $10$\\

\hline
\end{tabular}
\end{table}

\subsection{For Cyclostationary Data}

We also perform the computational analysis on the cyclostationary data with various $\gamma$ values. Recall that the nonzero diagonal entries $\lambda_j$ in $\Lambda$ is given by the formula in (\ref{eq:genlam}). The generated plots for a cyclostationary data with $M=128$, $m=8$, $N=10$, $\gamma=0.8$, and $\sigma_n^2=0.05$ form Figure~\ref{fig:cyclo}.

\begin{figure}[h!]
\hspace{-1.2cm}
\includegraphics[width=1.2\linewidth]{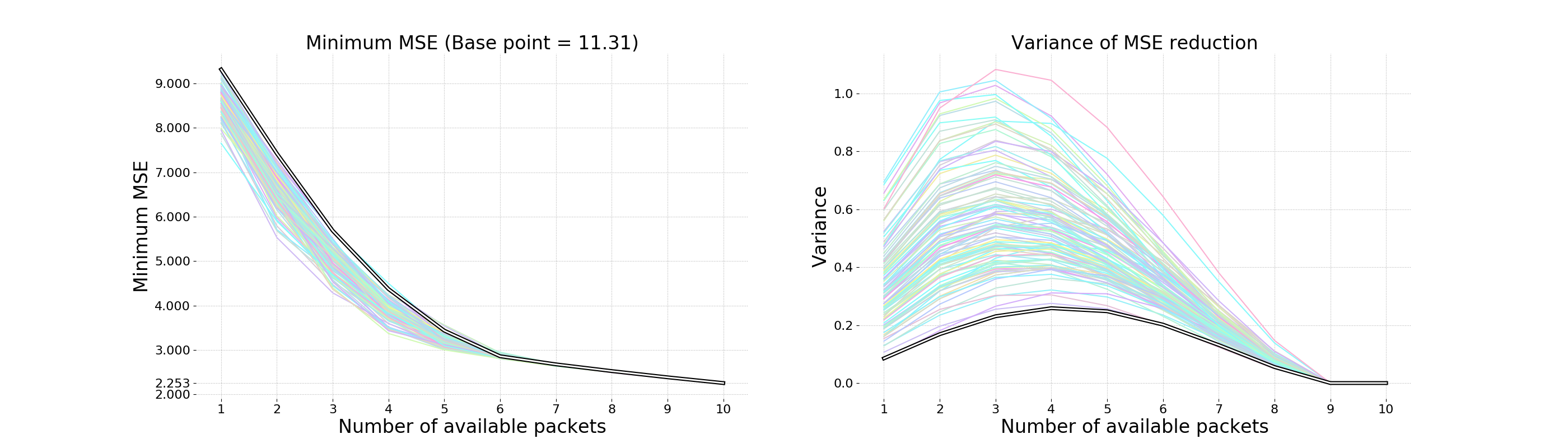}
\caption{Cyclostationary data with $M=128$, $m=8$, $N=10$, $\gamma=0.8$, and $\sigma_n^2=0.05$.}
\label{fig:cyclo}
\end{figure}

Table~\ref{table:cyclo} lists some computed values for the specified input parameters. The base point is $\sqrt{M}$ and the diagonal entries in $\Lambda$ are no longer monotonically nonincreasing. The presentation of $\{s_k\}$ for $k \in \bbra{M}$ must be adjusted accordingly since the threshold $t$ is meaningless here without a proper manipulation. Our strategy is to first order the diagonal entries in $\Lambda$ in a nonincreasing way and store the corresponding permutation $\tau$ of the indices. We then apply the method of determining $t$ and the $s_k$ for $k \in \bbra{t}$ as in the case of the typical stochastic data. Finally, we apply $\tau^{-1}$ to the set of indices to retrieve the correct index $k$ for each $s_k$. We use Entry 1 in Table~\ref{table:cyclo} to explain their presentation. The notation $[3]^3[2]^9[1]^6[0]^{29}[1]^6[2]^8[3]^3$ says that $s_k=3$ for $ k \in \bbra{1,3} \cup \bbra{62,64}$, $s_k=2$ for $k \in \bbra{4,12} \cup \bbra{54,61}$, $s_k=1$ for $k \in \bbra{13,18} \cup \bbra{48,53}$, and $s_k=0$ for $k \in \bbra{19,47}$.

\begin{table}[h!]
\caption{Computed values for cyclostationary data}
\label{table:cyclo}
\renewcommand{\arraystretch}{1.1}
\setlength{\tabcolsep}{0.1cm}
\centering
\begin{tabular}{c c c c c c c c l c}
\hline
No. & $M$ & $N$ & $m$ & $\gamma$ & $\sigma_n^2$ & Max $\Delta(N)$ & $\MSE(N)$ &  $\{s_k : k \in \bbra{M}\}$ & $\delta_{0.1}$ \\
\hline
1 & $64$ & $8$ & $8$ & $0.8$ & $0.05$ & $6.84$ & $1.16$ & $[3]^3[2]^9[1]^6[0]^{29}[1]^6[2]^8[3]^3$ & $6$ \\
		
2 & & $16$ & $4$ & & &  &  &  & $12$ \\ 
		
3 & $128$ & $10$ & $8$ & $0.9$ & $0.05$ & $9.90$ & $1.42$  &  $[3]^6[2]^9[1]^5[0]^{89}[1]^5[2]^8[3]^6$ & $8$ \\
		
4 &  &  &  & $0.7$ & & $8.33$ & $2.98$ &  $[2]^{12}[1]^{17}[0]^{71}[1]^{17}[2]^{11}$ & $8$ \\
		
5 & $256$  & $8$ & $8$ & $0.8$ &  $0.05$ & $11.02$ & $4.98$  & $[2][1]^{31}[0]^{193}[1]^{31}$ & $8$ \\
		
6 & &  &  & & $0.1$ & $9.83$ & $6.17$  & $[2]^9[1]^{15}[0]^{209}[1]^{15}[2]^8$ & $7$ \\
		
7 & & $16$ & $4$  & &  $0.5$ & $6.57$ & $9.43$  & $[3]^6[2]^6[1]^{3}[0]^{226}[1]^4[2]^6[3]^5$ & $5$\\
\hline
\end{tabular}
\end{table}

\subsection{An Adaptive Design}

A user may want to set a minimal acceptable number of available packets. Depending on the current channel situation, the user may prefer some flexibility in adapting the input parameters. Our implementation routine naturally reflects various requirements. To illustrate this point, consider a stochastic data with $M=1024$, $\lambda_j=0.8^{j-1}$ for $j \in \bbra{1024}$, $N=16$, $m=8$, and $\sigma_n^2=0.1$. The base point is $5.00$ and the best $\MSE(16)$ is $0.304$. Given a current channel, the user infers that, out of the $16$ possible packets, only up to $12$ arbitrary packets can be made available within a desirable time. With this additional constraint, the best $\MSE(12)$ is $0.366$. Imposing the smoothness condition, the best subspace arrangement for the original setup is generally no longer the best in the adapted situation. The user then adjusts accordingly by using this newly calculated best subspace arrangement. Figure~\ref{fig:truncate} allows for an easy comparison of the relevant plots.

\begin{figure}[h!]
	\hspace{-1.2cm}
	\includegraphics[width=1.2\textwidth]{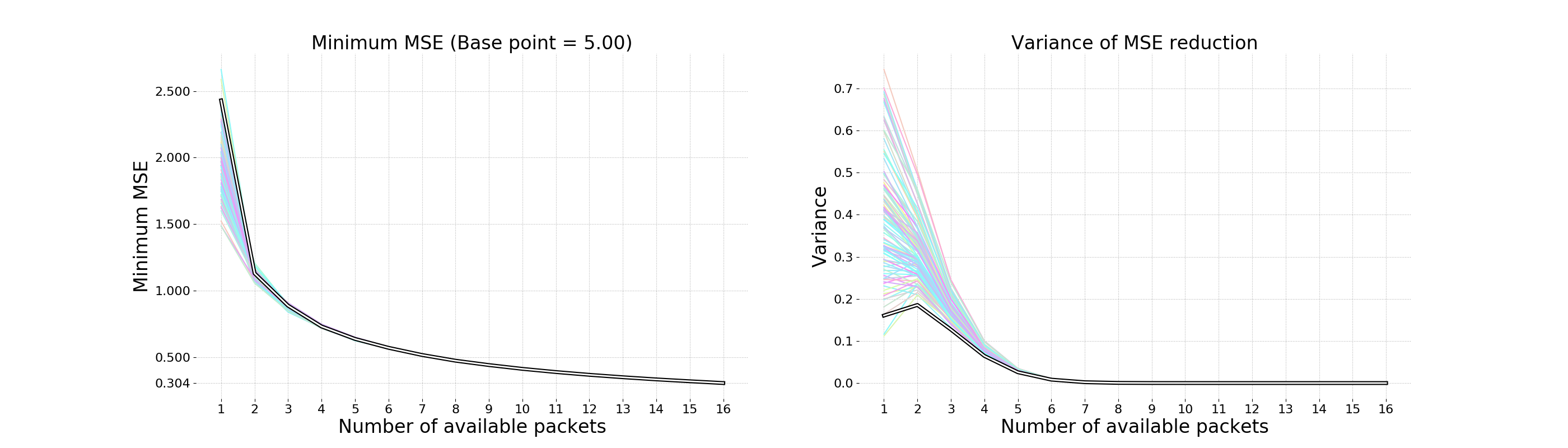}
	\smallskip\par
	\hspace{-1.2cm}
	\includegraphics[width=1.2\textwidth]{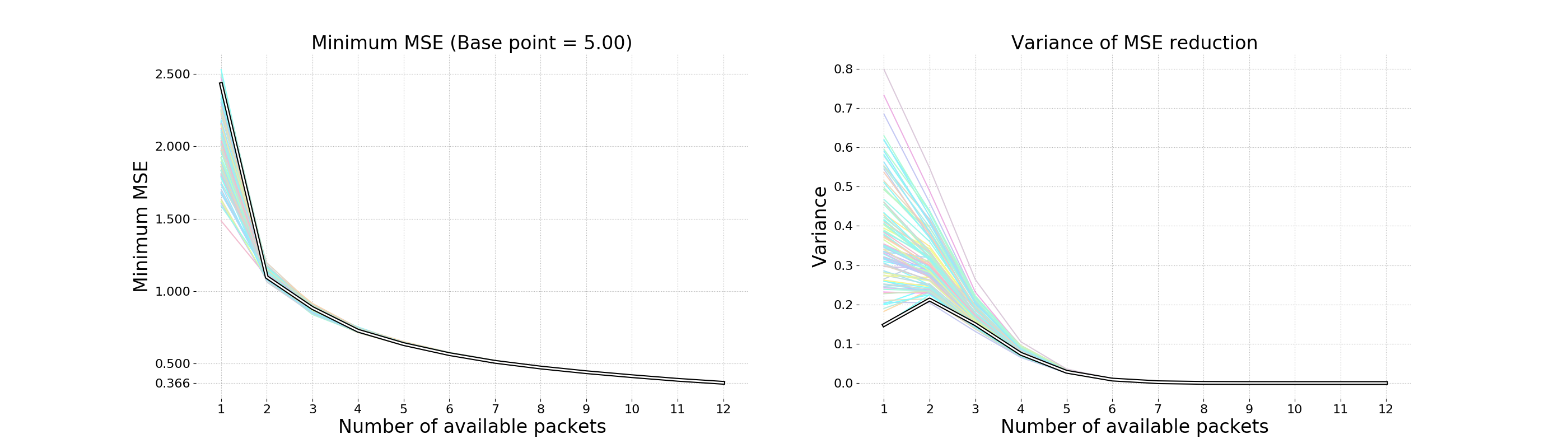}	
	\caption{An adaptive design for a stochastic data with $M=1024$, $\lambda_j=0.8^{j-1}$, $N=16$, $m=8$, and $\sigma_n^2=0.1$. Above: original setup. Below: only up to $12$ arbitrary packets are available.} \label{fig:truncate}
\end{figure}

\section{Connection to Grassmannian Packings}\label{sec:Kuty}

We now discuss how our approach relates to the work of Kutyniok \etal in~\cite{Kutyniok2009}. We begin with their setup. As in our estimation above, they use the linear minimum mean squared error estimation. The data is a random vector $\bx \in \bR^{M}$ of mean $0$ and covariance matrix $\BR_{xx}=\lambda \BI_{M}$ with $\lambda \triangleq \sigma_x^2$. The projections $\BP_{w_j}$ for $j \in \bbra{N}$ are general projections, not necessarily aligned with the standard basis. The estimation is based on the data's {\em fusion frame measurements} in the presence of additive white noise with possibilities of erasures. Their objective is to design a fusion frame which is robust against noise and erasures starting from erasures of any one subspace to those of any two or more subspaces.

Their analysis leads to three design criteria. First, in the presence of noise but without any erasure, the subspaces are best arranged in the form of a tight fusion frame with $A \cdot M = \sum_{k=j}^N m_j$ where $m_j$ is the dimension of subspace $\mW_j$. To robustly handle any one subspace erasure the subdimensions must be equal, \ie, $m_j=m$ for all $j$. Continuing to robustness against any two erasures, the Grassmannian packing of the subspaces yields the best error reduction in the estimation. With three or more subspaces unavailable, one should use subspace arrangement that forms an equidistance tight fusion frame with equal subdimension~\cite[Theorem~3.3]{Kutyniok2009}. Note that, in order to compare the results with our approach, the statement of the theorem needs to be refined. The $\MSE$ still depends on the number $r$ of erased subspaces and is not constant for all $r \geq 3$. The refined statement reads:

\begin{quote}
Let $\{\mW_k\}_{k=1}^N$ be an equidistance tight fusion frame with $\dim(\mW_k)=m$ for all $k \in \bbra{N}$. Then the $\MSE(\lambda=\sigma_x^2,\sigma_n^2,N-r)$ due to $r$ subspace erasures for each $r \in \bbra{3,N-1}$ depends only on $r$. 
\end{quote}

The work of Kutyniok \etal in~\cite{Kutyniok2009} made use of two simplifying assumptions that, given the results that we have obtained above, can be removed to yield better $\MSE$ performance. First, their choice of using the matrix $\BI - \BE$, accounting for the loss of data, to avoid recalculating $\BR_{ee}$ on every occasion (\cite[Section 3]{Kutyniok2009}) degrades the performance of the estimation process. A more careful analysis on the matrix $\BR_{ee}$ given information about the specifics of any set of $\ell$ available packets allows for a sharp determination of the achievable $\MSE$ for each particular instance. Second, considering only the case of $\BR_{xx}=\sigma_x^2 \BI_M$ does not reflect many realistic situations. It is more common to have data with $\BR_{xx}=\Psi \Lambda \Psi^{\top}$ where $\Lambda=\diag(\lambda_1,\lambda_2,\ldots,\lambda_M)$ depends on the exact, and usually given or estimated, statistical characteristics of the data.

We now follow the setup in~\cite{Kutyniok2009} with $\sigma_x^2 \BI$ replaced by $\Lambda$ and retrace the analysis, starting from the no erasure case onward. We form the composite measurement matrix $\bz_{\rm combi}$ by concatenating the $\bz_k$ for $k \in \bbra{N}$ and define the sum of the projections by using the composite basis matrix $\BU :=(\BU_1 | \ldots | \BU_N)$. When there is no erasure, the error covariance matrix is
$\BR_{ee}=\displaystyle{\left(\BR_{xx}^{-1} + \frac{1}{\sigma_n^2} \BU \BU^{\top}\right)^{-1}}$. Let $\phi_j^{-1}$ for $j \in \bbra{M}$ be the $j$-th eigenvalue of $\BR_{ee}^{-1}$. Hence, $\Tr(\BR_{ee})=\sum_{j=1}^M \phi_j$. 
For each $j$, we have $\displaystyle{
	\frac{1}{\lambda_j} + \frac{A}{\sigma_n^2} \leq \frac{1}{\phi_j} \leq \frac{1}{\lambda_j} + \frac{B}{\sigma_n^2}}$ by (\ref{eq:frame}).
This implies $\displaystyle{
\sum_{j=1}^{M} \frac{\lambda_j \sigma_n^2}{\sigma_n^2 + \lambda_j B} 
\leq \sum_{j=1}^{M} \phi_j
\leq \sum_{j=1}^{M} \frac{\lambda_j \sigma_n^2}{\sigma_n^2 + \lambda_j A}}$. Highlighting the {\it noise-to-signal ratio}, 
$\displaystyle{
\sum_{j=1}^M \frac{\lambda_j}{1+\left(\frac{\lambda_j}{\sigma_n^2} \right)B} \leq \Tr(\BR_{ee})=\sum_{j=1}^M \phi_j \leq 
\sum_{j=1}^M \frac{\lambda_j}{1+\left(\frac{\lambda_j}{\sigma_n^2} \right)A}}$. Let $L \triangleq \sum_{k=1}^N m_k$. Then the minimal $\MSE$ is achieved when $A=B$, \ie, when the fusion frame is tight. Thus, $A=\displaystyle{\frac{L}{M}}$ and 
\begin{equation}\label{eq:MSE_N_A}
\MSE(\Lambda,\sigma_n^2,N,\mbox{tight fusion frame})=\sum_{j=1}^M \frac{\lambda_j}{1+\left(\frac{\lambda_j}{\sigma_n^2} \cdot \frac{L}{M}\right)}.
\end{equation}

\begin{remark}\label{remark:simplyframe}
If we simply have a frame without requiring $\sum_{j=1}^N \BP_k \geq A \BI_{M}$, then $\BR_{ee}^{-1}$ is diagonal with entries  $\phi_j^{-1}=\displaystyle{\frac{1}{\lambda_j}+ \frac{s}{\sigma_n^2}}$, making 
$\displaystyle{\phi_j= \frac{\lambda_j \sigma_n^2}{\sigma_n^2 + s\lambda_j}}$ for $j \in \cP_s$. 
Hence, $\displaystyle{
\MSE(\Lambda,\sigma_n^2,N)=\sum_{k=1}^{M} \phi_k =\sum_{s=0}^{N} 
\sum_{j \in \cP_s} \frac{\lambda_j \sigma_n^2}{\sigma_n^2 + s\lambda_j}}$, which was established earlier in (\ref{eq:N_align}). 
\end{remark}

It is clear that removing the requirement of using a fusion frame yields lower $\MSE$ estimators when $\BR_{xx}=\Lambda$. When $\BR_{xx}=\sigma_x^2 \BI$, however, we have already seen in the derivation of (\ref{eq:MSE_tight}) that the minimum value for $\MSE(\sigma_x^2,\sigma_n^2,N)$ is indeed achieved when an $A$-tight fusion frame is used.

Continuing our analysis on the general erasure model described in~\cite[Section 3]{Kutyniok2009}, still with $\BR_{ee}=\Lambda$, let $r$ subspaces out of the $N$ forming the frame be erased. Assume that all packets have equal dimension $m$. Let $\Xi \subset \bbra{N}$ be the set of indices of the {\em erased} subspaces and let $\BE$ be the corresponding (symmetric) $L \times L$ block diagonal erasure matrix. Its $j$-th diagonal block is $\BI_{m}$ if $j \in \Xi$ and is an $m \times m$ zero matrix if $j \notin \Xi$. Then the composite measurement vector with erasures is $\tilde{\bz}= (\BI - \BE) \bz$. Thus, in $\tilde{\bz}$ the measurement vectors associated with the erased subspaces are set to $\0$. 

The estimate of $\bx$ is $\tilde{\bx}=\BF \tilde{\bz}$. The error covariance matrix $\widetilde{\BR}_{ee}$ for this estimate is
\[
\EE[(\bx - \tilde{\bx})(\bx - \tilde{\bx})^{\top}]=
\EE[(\bx - \BF(\BI - \BE) \bz)(\bx - \BF(\BI - \BE) \bz)^{\top}].
\]
Let $\BR_{ee} \triangleq \BR_{xx} - \BR_{xz} \BR_{zz}^{-1} \BR_{zx}$ and $\overline{\BR}_{ee} \triangleq \BR_{xz} \BR_{zz}^{-1} \BE \BR_{zz} \BE^{\top} \BR_{zz}^{-1} \BR_{zx}$. We use them to define   
$\widetilde{\BR}_{ee} \triangleq \BR_{ee}+\overline{\BR}_{ee}$. Now we minimize the trace of $\overline{\BR}_{ee}$, which can be written as
\begin{equation}\label{eq:Rtil}
\BR_{xx} \BU (\sigma_n^2 \BI_{L} + \BU^{\top} \BR_{xx} \BU)^{-1} \BE (\sigma_n^2 \BI_{L} + \BU^{\top} \BR_{xx} \BU) 
\BE^{\top} (\sigma_n^2 \BI_{L} + \BU^{\top} \BR_{xx} \BU)^{-1} \BU^{\top} \BR_{xx}.
\end{equation}
Let $\displaystyle{ \alpha_j \triangleq \frac{\lambda_j}{\sigma_n^2+ \frac{L}{M} \lambda_j}}$. By Proposition~\ref{propW} with $\BA=\sigma_n^2 \BI_{L}$, $\BC=\BU^{\top}$, and $\BD=\BR_{xx} \BU$, we write the symmetric matrix  $\displaystyle{\BR_{zz}^{-1}=(\sigma_n^2 \BI_{L} + \BU^{\top} \BR_{xx} \BU)^{-1}}$ as
\begin{equation*}\label{eq:Inv}
\frac{1}{\sigma_n^2}\BI_{L} - \frac{1}{\sigma_n^{4}} \BU^{\top} 
\left(\BI_{M} + \frac{L}{M \sigma_n^2} \BR_{xx} \right)^{-1} \BR_{xx} \BU =
\frac{1}{\sigma_n^2}\BI_{L} - \frac{1}{\sigma_n^{2}} \BU^{\top} 
\diag(\alpha_1, \ldots,\alpha_M) \BU.
\end{equation*}
Let 
$\displaystyle{
\BY \triangleq \BR_{xx} \BU \left(\frac{1}{\sigma_n^2} \BI_{L} - 
\frac{1}{\sigma_n^2} \BU^{\top} ~\diag(\alpha_1, \ldots,\alpha_M)~ \BU\right)}$. 
Performing the calculation, 
\begin{align*}
\BY &=\frac{1}{\sigma_n^2} 
~\diag(\lambda_1,\ldots,\lambda_{M})~\BU - 
\frac{L}{M \sigma_n^2} 
~\diag(\lambda_1,\ldots,\lambda_{M})~
\diag(\alpha_1, \ldots,\alpha_M)~\BU \\
&=\diag(\alpha_1, \ldots,\alpha_M)~\BU.
\end{align*}
Hence, 
$\displaystyle{
\overline{\BR}_{ee}= \BY \BE (\sigma_n^2 \BI_{L} + \BU^{\top} \BR_{xx} \BU) \BE^{\top} \BY^{\top}}$. Since $\BU \BE \BU^{\top}=
\displaystyle{\sum_{\ell \in \Xi} \BP_{\ell}}$, $\overline{\BR}_{ee}$ is given by 
\begin{multline*}
\diag(\alpha_1, \ldots,\alpha_M)~ 
\sigma_n^2 \left(\sum_{\ell \in \Xi} \BP_{\ell}\right) ~
\diag(\alpha_1, \ldots,\alpha_M)\\
+
\diag(\alpha_1, \ldots,\alpha_M) 
\left(\sum_{\ell \in \Xi} \BP_{\ell}\right) ~
\diag(\lambda_1, \ldots, \lambda_{M})~
\left(\sum_{\ell \in \Xi} \BP_{\ell}\right)~
\diag(\alpha_1, \ldots,\alpha_M).
\end{multline*}
Rearranging for better computation of the trace, we conclude that $\overline{\BR}_{ee}$ is given by 
\begin{equation}\label{eq:simpler}
\diag(\alpha_1, \ldots,\alpha_M)
\left(\sigma_n^2 \sum_{\ell \in \Xi} \BP_{\ell} + 
\sum_{\ell \in \Xi} \BP_{\ell}
~\diag(\lambda_1, \ldots, \lambda_{M})~
\sum_{j \in \Xi} \BP_j\right)
\diag(\alpha_1, \ldots,\alpha_M).
\end{equation}

Let $\MSE_0 \triangleq \MSE(\Lambda,\sigma_n^2,N,\mbox{tight fusion frame})$ in (\ref{eq:MSE_N_A}) and $\overline{\MSE}:=\Tr(\overline{\BR}_{ee})$. Our $\MSE(\Lambda,\sigma_n^2, N-r,\mbox{tight fusion frame})$ is therefore given by $\Tr[\widetilde{\BR}_{ee}]=\MSE_0+ \overline{\MSE}$.

\begin{remark}
It is immediate to verify that our results include the corresponding results in~\cite{Kutyniok2009} as special cases. One simply replaces each $\lambda_j$ in (\ref{eq:MSE_N_A}) and (\ref{eq:simpler}) above with $\sigma_x^2$. 
In the former, one arrives at
	\[
	\sum_{k=1}^{M} \frac{\sigma_x^2}{1+\frac{A \sigma_x^2 }{\sigma_n^2}}=
	\frac{M \sigma_x^2 \sigma_n^2}{\sigma_n^2+A \sigma_x^2} \mbox{ with } A=\frac{L}{M},
	\]
	which is the $\MSE_0$ in~\cite[Equation (7)]{Kutyniok2009}. In the latter, the result is \cite[Equation (9)]{Kutyniok2009}
	\[
	\alpha^2 \Tr\left(\sigma_n^2 \sum_{j \in \Xi} \BP_j + \sigma_x^2 \left( \sum_{j \in \Xi} \BP_j\right)^2\right) \mbox{ with } \alpha \triangleq\frac{\sigma_x^2}{\sigma_n^2+A \sigma_x^2}.
	\]
\end{remark}

Hence, following the approach of Kutyniok \etal using a more general $\BR_{xx}$ confirms that their results are obtained when $\BR_{xx}=\sigma_x^2 \BI_{M}$. It is however clear that in all applicable instances, the $\MSE(\Lambda,\sigma_n^2,\ell=N-r)$ in (\ref{eq:l_align}) is more precise since, given any $r$ out of $N$ packets missing, the error covariance matrix $\BR_{ee}$ is recalculated. It is here where Grassmannian packings no longer provide any benefit. The better $\MSE$ performance indeed requires more yet still reasonable computations. 

In designing packets of measurement for estimating the unknown vector $\bx$ to be robust against noise and erasures, starting from one erasure onward, we have the following conclusions. First, the best setup for recovery when all packets are available is as follows. When $\BR_{xx}=\sigma_x^2 \BI$, we can indeed use a tight fusion frame. When $\BR_{xx}=\Lambda$, one first determines the best values of $s_1,s_2,\ldots, s_M$ that maximize the $\Theta$ or $\Upsilon$ function. Here we do not even have a fusion frame if there is a $t<M$ such that $s_j =0$ for all $j \in \bbra{t+1,M}$. 
Given that there is one subspace erasure, we have the following strategy: If $\BR_{xx}=\sigma_x^2 \BI$, then we ensure that all subspaces have equal dimension $m$. When $\BR_{xx}=\Lambda$, compute for $\MSE(\Lambda,\sigma_n^2,N-1)$ using the information on the particular missing packet. Our work still assumes that all subspaces are of equal dimension $m$. It remains an interesting possibility to consider letting the subspace $\mW_j$ be of dimension $m_j$ which is determined to be a function of $\lambda_j$ for $j \in \bbra{M}$. In the framework of \cite{Kutyniok2009}, when any two subspaces are erased and $\BR_{xx}=\sigma_x^2 \BI$, one uses suitable Grassmannian packings. If, subsequently, $r \geq 3$ erasures occur, the formula for the corresponding $\MSE$ depends only on $r$. On the other hand, when $\BR_{xx}=\Lambda$ and $r \geq 2$ erasures take place, one should use the formula for  $\MSE(\Lambda,\sigma_n^2,N-r)$, incorporating the exact set $\Xi$ of the missing packets in the computation, to evaluate the system's $\MSE$ performance.

\section{Conclusion and Other Directions}\label{sec:conclusion}

We put forward a general sensing method that produces holographic representations of data. Packets that encode the information are designed to be equally important and the progressive recovery of the unknown vector from those packets will have as smooth decreasing error profiles as possible. Thus, the quality of recovery depends on the number of available packets, regardless of the order in which they arrive. An optimality analysis based on the least-squares estimation theory is supplied in detail. We are currently investigating if other known techniques, such as network coding and projection matrix designs for compressive sensing, can further improve our method.

To gain significantly from our holographic sensing, the data must be useful at various quality levels. Our approach is well suited for storage and distributed retrieval of information such as speech, audio, image, video, and volumetric data where there is unpredictable delay in gathering the full data and, hence, an early degraded preview can be very useful to decide whether or not to proceed with the retrieval process.

\section*{Acknowledgements}

Singapore National Research Foundation and Israel Science Foundation joint program NRF2015-NRF-ISF001-2597 and Nanyang Technological University Grant Number M4080456 support the work of the authors.

\end{document}